\documentclass[aps,prl,twocolumn,superscriptaddress,nofootinbib]{revtex4-2}

\usepackage{amsmath,amssymb,bm}
\usepackage{graphicx}
\usepackage{hyperref}
\usepackage[nameinlink]{cleveref}
\usepackage{siunitx}
\usepackage{braket}
\usepackage{physics}
\usepackage{bbm}
\usepackage{xcolor}
\usepackage{amsthm}
\usepackage{appendix}
\usepackage{amssymb}

\usepackage[T1]{fontenc}
\usepackage{comicneue}

\usepackage{tikz}
\usetikzlibrary{arrows.meta,positioning,calc}

\hypersetup{colorlinks=true,linkcolor=blue,citecolor=blue,urlcolor=blue}
\crefname{equation}{Eq.}{Eqs.}
\Crefname{equation}{Equation}{Equations}
\crefname{figure}{Fig.}{Figs.}
\Crefname{figure}{Figure}{Figures}
\crefname{section}{Sec.}{Secs.}
\Crefname{section}{Section}{Sections}
\crefname{appendix}{Appendix}{Appendices}
\Crefname{appendix}{Appendix}{Appendices}

\newcommand{\hidmin}{n_{d}^{\min}}

\renewcommand{\b}{\beta}

\newtheorem{theorem}{Theorem}
 
\newtheorem{lemma}{Lemma} 
\newtheorem{definition}{Definition}

\begin{document}

\title{Information-Theoretic Scaling Laws of Neural Quantum States}
\author{Yiming Lu}
\affiliation{Department of Physics, Tsinghua University, Beijing 100084, China}
\author{Sriram Bharadwaj}
\thanks{Equal contribution}
\affiliation{
Mani L. Bhaumik Institute for Theoretical Physics, Department of Physics and Astronomy,\\ University of California, Los Angeles, CA 90095, USA
}
\author{Dikshant Rathore}
\thanks{Equal contribution}
\affiliation{
Mani L. Bhaumik Institute for Theoretical Physics, Department of Physics and Astronomy,\\ University of California, Los Angeles, CA 90095, USA
}
\author{Di Luo}
\email{diluo1000@gmail.com}
\affiliation{Department of Physics, Tsinghua University, Beijing 100084, China}
\affiliation{Institute of Advanced Study, Tsinghua University, Beijing 100084, China}
\date{\today}

\begin{abstract}
We establish an information-theoretic scaling law for generic autoregressive neural quantum states, determined by the middle-cut mutual information of the wavefunction amplitude.
By formalizing the virtual bond as an effective information channel across a sequence bipartition, we rigorously prove that exact autoregressive representation of a quantum state requires the virtual-bond dimension to scale with the amplitude mutual information. For stabilizer-state families, we show that this law yields an explicit, analytical rank formula. Applying this framework across quantum-state tomography, ground-state and finite-temperature learning, our numerical experiments expose precise exponent matching, architecture-dependent scaling differences between recurrent and Transformer neural quantum state, and the critical role of autoregressive basis ordering. These results establish a rigorous physical link between the intrinsic structure of a quantum many-body state and the corresponding neural-network capacity required for its faithful representation.
\end{abstract}

\maketitle
\textit{Introduction.}
Quantum many-body simulation lies at the heart of modern physics, underpinning our understanding of quantum materials, strongly correlated systems, and quantum information processing. However, the exponential growth of Hilbert space poses a fundamental challenge to classical computation, making efficient representations of quantum states a central problem. In recent years, neural-network quantum states (NQS) have emerged as a powerful and flexible framework for addressing this challenge \cite{carleo2017science,carleo2019rmp,deng2017prx,gao2017natcomm}. These approaches enable a broad range of applications, including quantum-state tomography \cite{torlai2018natphys,torlai2018prl}, variational ground-state search \cite{sharir2020prl,hibatallah2020prr,vicentini2022netket}, finite temperature \cite{nys2025fermionic}, real-time dynamics \cite{schmitt2020prl}, and electronic-structure calculations \cite{pfau2020prr,hermann2020natchem,Pescia_2024,Pescia_2022,kim2023neuralnetworkquantumstatesultracold, zhang2025neural,luo2019backflow}. Notably, in regimes characterized by frustration, topological order, and strong entanglement, NQS have demonstrated remarkable expressive power and scalability, extending the reach of classical methods for quantum many-body problems \cite{glasser2018prx,passetti2023entanglementtransitiondeepneural,PhysRevLett.134.076502,Choo_2019,Li_2022,Wu_2023,Hibat_Allah_2023,luo2023gauge,luo2021gauge,martyn2023variational}.

Despite their empirical success, a quantitative understanding of representability in neural-network quantum states remains limited. In practice, both architecture and model size are typically selected through extensive empirical tuning, involving repeated sweeps over depth and hidden dimensions for each target system. This approach is computationally costly and conceptually unsatisfactory, as it does not reveal which intrinsic features of a quantum state control the required representation capacity. From a theoretical perspective, universal approximation theorems guarantee asymptotic expressivity for broad model classes \cite{cybenko1989,hornik1989}, but offer little guidance on finite-size scaling or resource requirements. More recently, a growing body of work has begun to explore expressivity bounds, entanglement constraints, and representability limits for neural quantum states \cite{yang2024when,paul2025bound,deng2017prx,Jreissaty_2026,denis2025comment,levine2019quantum,Sharir_2022,li2025representationalpowerselectedneural,Passetti_2023,Jia_2020,kumar2026unlearnable,Luo_2023}, drawing parallels to scaling phenomena observed in modern large-scale machine learning models. Yet, in contrast to the rapid empirical advances seen in large language models~\cite{kaplan2020scaling,chen2025l2mmutualinformationscaling,cagnetta2026deriving}, where scaling laws provide predictive guidance for model design, an analogous, operational principle that links quantum-state structure to the required neural-network capacity is still lacking.

In this work, we introduce an information-theoretic scaling principle for quantum-state representations based on wavefunction amplitude structure. Any faithful representation must reproduce the amplitude distribution in a chosen basis, making amplitude complexity a necessary component of representational cost. This perspective is also operationally essential, as sampling and Monte Carlo procedures depend directly on the induced amplitude distribution. We show that the middle-cut mutual information of the quantum state amplitude gives rise to a virtual-bond (VB) scaling law for autoregressive neural-network quantum states (ARNN-NQS), which quantitatively links model capacity to the amplitude mutual information. For stabilizer-state families, we derive an explicit analytic scaling expressions in terms of closed-form rank formulas. We validate these predictions across three canonical settings: supervised tomography for tunable stabilizer families, ground-state learning for the toric code, and finite-temperature learning for thermofield-double states. Numerical experiments reveal precise exponent matching under controlled variations of target complexity, architecture-dependent scaling differences between recurrent and Transformer-based NQS, and a strong dependence of the required model size on basis ordering. Together, these results establish an information-theoretic principle that determines the necessary neural-network capacity required for faithful autoregressive representation of quantum many-body states from their intrinsic structure.

\begin{figure*}[ht]
\centering
\includegraphics[width=\textwidth]{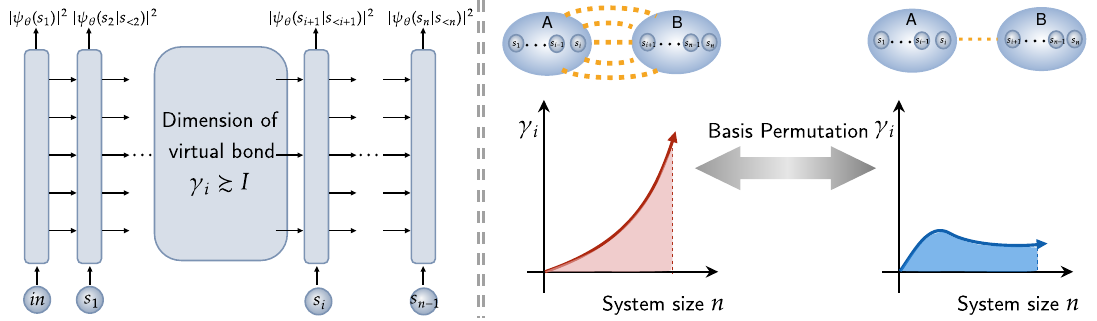}

\caption{Virtual-bond scaling law for ARNN-NQS.
Left: in an ARNN-NQS, information from the prefix $\bm s_{\leq i}$ is passed across the cut through a virtual bond, its effective dimension $\gamma_i$ quantifies the cross-cut information retained by the model.
Right: the scaling required of $\gamma_i$ is set by the amplitude complexity of the target family in the chosen basis.
A basis permutation can possibly change the required virtual-bond dimension dramatically. For example, converting a target with large $\gamma_i$ scaling into one with smaller or even $O(1)$ scaling.
The middle-cut mutual information of the quantum state amplitude directly determines the virtual-bond dimension required for faithful representation.}

\label{fig:nd_models}
\end{figure*}

\textit{Autoregressive neural quantum states.}
An ARNN-NQS represents a many-body wavefunction sequentially,
\begin{equation}
\psi_\theta(\bm s)=\prod_{i=1}^{n}\psi_\theta(s_i|\bm s_{<i}),
\qquad
P_\theta(\bm s)=|\psi_\theta(\bm s)|^2.
\end{equation}
This representation supports efficient sequential sampling \cite{sharir2020deep} and underlies many state-of-the-art applications of neural quantum states, including quantum tomography, ground state, and dynamics simulations \cite{torlai2018natphys,torlai2018prl,sharir2020prl,hibatallah2020prr,schmitt2020prl,vicentini2022netket,carrasquilla2021probabilistic,luo2023gauge,chen2023antn,luo2022autoregressive}.
Yet the required model capacity is still usually determined empirically, by sweeping architecture and width for each target family.
Such a procedure is costly and, more importantly, does not identify which structural features of the target state force the representation cost to grow.
Our goal is to connect target-state complexity directly to the model capacity required by exact ARNN-NQS representations.

Any exact ARNN-NQS representation must in particular reproduce the target amplitude distribution.
To quantify the cross-cut capacity required for this task, we introduce the \textit{virtual bond}, which captures the information that must be transmitted across a sequence cut in order to predict the future from the past.

\begin{definition}[ARNN-NQS Virtual Bond]
At cut position $i$, a variable $z_i$ is called a valid virtual bond for an ARNN-NQS if
\begin{equation}
|\psi_\theta(\bm s_{> i}\!\mid\!\bm s_{\leq i})|^2
=
|\psi_\theta(\bm s_{> i}\!\mid\!z_i)|^2.
\end{equation}
Here $z_i$ serves as a sufficient statistic of $\bm s_{\leq i}$ and is assumed to take values in $\mathbb R^{\gamma_i}$.
We denote by $\gamma_i$ the dimension of the virtual bond $z_i$.
\end{definition}

Common ARNN-NQS architectures include recurrent models, masked-convolution autoregressive networks, and attention-based Transformers \cite{larochelle2011nade,oord2016pixelcnn,cho2014properties,vaswani2017attention}.
Here we focus on recurrent networks and autoregressive Transformers as two representative memory organizations: one based on a compressed recurrent bottleneck, and the other on an explicitly growing cache.
We now identify their virtual bonds and the associated scaling of $\gamma_i$.

For a RNN NQS, the recurrent update takes the form
\begin{equation}
h_i=F_\theta(h_{i-1},s_i),\qquad
|\psi_\theta(s_{i+1}\!\mid\!\bm s_{\le i})|^2=G_\theta(h_i).
\end{equation}
The virtual bond is therefore the hidden state, $z_i=h_i$, and its effective dimension scales as $\gamma_i= n_d$. In this work, we choose 1D RNN NQS as a representative, which is referred to as RNN throughout the remainder of the paper unless otherwise specified.

For an autoregressive Transformer NQS, each token is mapped to query--key--value vectors, and the next prediction attends to the cached prefix memory:
\begin{equation}
\begin{aligned}
(q_i,k_i,v_i)&=\text{QKV-Map}(s_i),\\
|\psi_\theta(s_{i+1}\!\mid\!\bm s_{\le i})|^2&=\text{FF}(\mathrm{ATTN}\!\left(q_i,k_{\le i},v_{\le i}\right)).
\end{aligned}
\end{equation}
The virtual bond is therefore the present query and the key--value cache, i.e., $z_i=(q_i, k_{\le i},v_{\le i})$.
At fixed hidden width $n_d$, each cached token contributes a constant-size block, so the effective dimension grows linearly with $i$.
Explicit architecture-to-VB mappings are given in \cref{app:arnn-vb-definition}.

\textit{Quantum state amplitude complexity.}
Having introduced the virtual bond $z_i$, we now define the target-side complexity that this bottleneck must support.
Fix a local product basis $\mathcal S$ and an ordering of sites.
For a target $n$-qubit state $\ket{\psi_n}$, define the associated amplitude distribution
\begin{equation}
P_n(\bm s)=|\!\braket{\bm s}{\psi_n}\!|^2.
\end{equation}
At the middle cut $m=\lfloor n/2\rfloor$, partition the variables into $A=\bm s_{\leq m}$ and $B=\bm s_{> m}$, and define
\begin{equation}
\mathcal I(n)=I_{P_n}(A\!:\!B),
\end{equation}
where $I_P(A\!:\!B)=H(A)+H(B)-H(AB)$ and $H(\cdot)$ is the Shannon entropy.
We refer to this middle-cut classical mutual information (CMI) as the quantum-state amplitude complexity.

This quantity has an immediate operational interpretation for tensor-network representations.
For a matrix-product state (MPS), the middle-cut CMI is upper-bounded by the bipartite entanglement across the same cut, which is itself bounded by the logarithm of the bond dimension:
\begin{equation}
\mathcal I(n)\le 2S_{\mathrm{ent}}(n)\le 2\log_2 D_n.
\end{equation}
Thus any growth of the amplitude CMI with system size forces a corresponding growth of the MPS bond dimension.
Inspired by this observation, we derive the following scaling law for ARNN-NQS.

\begin{theorem}[VB Scaling Law]
Consider a target family in a fixed basis $\mathcal S$, and let
$m=\lfloor n/2\rfloor$ with $A=\bm s_{<m}$ and $B=\bm s_{\ge m}$.
If an autoregressive neural quantum state represents the corresponding amplitude distributions exactly across system sizes, then
\begin{equation}
\gamma_m\gtrsim \mathcal I(n) 
\end{equation}
\end{theorem}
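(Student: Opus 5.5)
The plan is to treat the virtual bond as an information channel and apply the data-processing inequality, then turn the resulting information bound into a dimension bound. Exact representation means $P_\theta(\bm s)=P_n(\bm s)$, so every information quantity computed under the model equals the one computed under the target. Since $z_m$ is a deterministic function of the prefix $A$, and by the definition of the ARNN-NQS virtual bond the conditional law of $B$ given $A$ factors through $z_m$, we have the Markov chain $B \to z_m \to A$ (equivalently $A\to z_m\to B$). Data processing then gives
\begin{equation}
\mathcal I(n)=I_{P_n}(A\!:\!B)=I(z_m\!:\!B)\le H(z_m),
\end{equation}
with equality in the first step because $z_m$ is sufficient and a function of $A$.

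The second step is to bound $H(z_m)$ by the bond's capacity. The bond is valued in $\mathbb R^{\gamma_m}$, so its Shannon entropy is not literally bounded by dimension. I would handle this by working at finite precision: each coordinate is stored with $b$ bits (a fixed, $n$-independent numerical resolution), so $z_m$ takes at most $2^{b\gamma_m}$ values and $H(z_m)\le b\,\gamma_m$. Equivalently, define the effective bond through the number of distinct values $|\{z_m(\bm s_{<m})\}|$ that the conditional distributions of $B$ require. This is the rank-type minimal sufficient statistic. An exact representation must distinguish all prefixes with different conditionals $P(B\mid A=a)$, and the number of such classes $K$ satisfies $\mathcal I(n)\le\log_2 K$. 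Combining the two steps gives $\gamma_m\ge \mathcal I(n)/b$, which is the claimed $\gamma_m\gtrsim\mathcal I(n)$, with the constant absorbing the per-coordinate precision. Applied across the family of sizes $n$, the scaling of $\mathcal I(n)$ lower-bounds the scaling of $\gamma_m$.

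The main obstacle is the second step: a real-valued $\gamma$-dimensional vector can, in principle, encode unbounded information, for example an injective map of all prefixes into $\mathbb R$. The argument therefore needs an explicit regularity assumption to be rigorous. I would first try the bounded-precision/quantization assumption above, since it matches how networks are actually implemented. As an alternative, one could assume the readout $G_\theta$ is Lipschitz and the conditionals must be resolved to accuracy $\epsilon$; a covering-number argument on a bounded ball in $\mathbb R^{\gamma_m}$ then gives $\log K\lesssim \gamma_m\log(1/\epsilon)$, yielding the same linear bound. The index convention $A=\bm s_{<m}$ versus $\bm s_{\le m}$ only shifts the cut by one site and does not affect the scaling.
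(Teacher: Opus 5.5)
Your main argument is correct. Under the finite-precision assumption it reaches exactly the paper's bound $\mathcal I(n)\le b\gamma_m$, but through a different key step.

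The paper does not use data processing. It factorizes the joint-probability matrix as $Q_{AB}=M_{\mathrm{left}}M_{\mathrm{right}}$ through the bond index, so $\operatorname{rank}(Q_{AB})\le|\mathcal Z_m|$. It then applies a separate lemma, $I(A\!:\!B)\le\log_2\operatorname{rank}(Q_{AB})$. That lemma is proved by normalizing $Q_{AB}$ to a matrix $M$ with top singular value at most $1$, and combining $\|M\|_F^2\le\operatorname{rank}(M)$ with $D(P_{AB}\|p_Ap_B)\le\log_2(1+\chi^2)$.

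Your chain $I(A\!:\!B)=I(z_m\!:\!B)\le H(z_m)\le\log_2|\mathcal Z_m|$ is more elementary. It also gives the intermediate bound $\mathcal I(n)\le H(z_m)$, which is sharper than $\log_2|\mathcal Z_m|$ whenever the bond alphabet is used non-uniformly. It would equally handle the paper's Lipschitz surrogate, since the quantized bond $\widehat z_m$ takes at most $|\mathcal N_\eta|$ values.

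What the rank route buys is a purely target-side intermediate quantity. First, $\log_2\operatorname{rank}(Q_{AB})$ never exceeds $\log_2$ of your count $K$ of distinct conditionals. Second, any factorization of $Q_{AB}$ with inner dimension $d$ forces $\log_2 d\ge\mathcal I(n)$. This includes a linear, MPS-like bond in which $P(B\mid A)$ is a combination of $d$ fixed suffix vectors. That is what ties the theorem to the tensor-network bond-dimension picture that motivates it.

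Your Lipschitz alternative, however, skips a step. With a continuous bond and exact representation, the number of distinct conditionals is not controlled by any covering number. Only the surrogate $\widetilde P(B\mid A)=P_\theta(B\mid\widehat z_m(A))$ has few classes. You therefore still need entropy continuity (Fannes--Audenaert) to transfer the bound back to $P_n$; this is where the paper's correction $\varepsilon\log_2(|\mathcal B|-1)+2h_2(\varepsilon/2)$ comes from.

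The Lipschitz hypothesis must also be placed on the full suffix map $z\mapsto P_\theta(\cdot\mid z)$, as in the paper. Placing it only on the one-step readout $G_\theta$ is not enough, since iterating $F_\theta$ over $n/2$ steps can inflate the constant.

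Finally, since $\log_2|\mathcal B|\approx n/2$, this correction is of order $\varepsilon n$. For sublinear $\mathcal I(n)$ you must let $\varepsilon$ shrink with $n$, so the exponent survives only up to a $\log n$ factor. The paper's own statement also passes over this point.
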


The theorem follows from a rank bound on the middle-cut joint-probability matrix $Q_{AB}$ induced by the quantum state.
A finite virtual bond imposes a constraint on the row structure of this matrix, and, together with the inequality $I_{P_n}(A\!:\!B)\le \log_2 \operatorname{rank}(Q_{AB})$, yields the stated scaling law.
Detailed proofs under finite-precision and Lipschitz assumptions are given in \cref{app:vb-law-proof}.

\Cref{fig:nd_models} summarizes the VB scaling law.
For a chosen target family and autoregressive ordering, one first computes the middle-cut CMI complexity $\mathcal I(n)$.
The theorem then converts this target-side quantity into a representability requirement on the autoregressive channel $\gamma_m \gtrsim \mathcal I(n)$.
In particular, if $\mathcal I(n)\sim n^\beta$, then the required virtual-bond dimension must scale with the same exponent.
The middle-cut CMI complexity therefore becomes a direct scaling criterion for ARNN-NQS design.

We further prove that stabilizer states provide an analytically tractable class for which the required virtual-bond scaling can be made explicit.
For an ordered stabilizer family, the middle-cut CMI complexity in the computational basis admits a closed-form rank expression.
Combined with the VB scaling law, this immediately yields a concrete criterion for the minimum virtual-bond growth required by any exact autoregressive representation.
We state the result here and defer the derivation to \cref{app:stab-cmi}.

\begin{theorem}[ARNN-NQS VB Scaling Law for Stabilizer State Representation]
\label{cor:stabilizer-vb-scaling}
Let $\ket{\psi_S}$ be an ordered $n$-qubit stabilizer state.
At the middle cut $m=\lfloor n/2\rfloor$, define $A=\bm s_{<m}$ and $B=\bm s_{\ge m}$.
Let $M=(M_A,M_B)$ be the binary parity-check matrix of the $Z$-type stabilizers in the stabilizer table of $\ket{\psi_S}$, restricted to columns in $A$ and $B$.
Then any exact autoregressive representation of this quantum state must satisfy
\begin{equation}
\gamma_m \gtrsim
\operatorname{rank}(M_A)+\operatorname{rank}(M_B)-\operatorname{rank}(M),
\end{equation}
where all ranks are over $\mathbb F_2$.
The right-hand side is independent of the syndrome.
\end{theorem}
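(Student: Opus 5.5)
The plan is to combine the VB Scaling Law (Theorem~1) with an exact computation of the middle-cut amplitude mutual information $\mathcal I(n)$ for a stabilizer state in the computational basis. Theorem~1 already gives $\gamma_m\gtrsim \mathcal I(n)$. So it suffices to show
\begin{equation}
I_{P}(A\!:\!B)=\operatorname{rank}(M_A)+\operatorname{rank}(M_B)-\operatorname{rank}(M)
\end{equation}
and to check that this quantity does not depend on the syndrome, i.e.\ on the signs of the stabilizer generators.

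The first step is to describe $P(\bm s)=|\braket{\bm s}{\psi_S}|^2$ explicitly. I would use the standard fact that a stabilizer state measured in the $Z$ basis gives the uniform distribution on an affine subspace $\mathcal A=x_0+V\subseteq\mathbb F_2^n$. Here $V^\perp$ is the row space of the $Z$-type part of the stabilizer group, meaning the elements with no $X$ component. Concretely, I would put the stabilizer tableau in a form where the generators with trivial $X$-part are exactly the rows of $M$. Each such generator $(-1)^{c}Z^{\bm r}$ imposes the parity constraint $\bm r\cdot\bm s=c$. The generators with nontrivial $X$-part act as translations: they permute the basis states within $\mathcal A$ up to phases, so all amplitudes on $\mathcal A$ have equal modulus. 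This gives $H(AB)=\log_2|\mathcal A|=n-\operatorname{rank}(M)$. The syndrome $c$ only shifts $x_0$.

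The second step computes the marginals. The marginal on $A$ is the pushforward of a uniform affine distribution under a linear projection, so it is uniform on the projected affine subspace $\pi_A(\mathcal A)$. This gives $H(A)=\dim\pi_A(V)$ and $H(B)=\dim\pi_B(V)$. The main obstacle is relating $\dim\pi_A(V)$ to $M_A$, because the projection of $V$ is dual to the constraints supported \emph{entirely} inside $A$, not to $M_A$ directly. By duality, $\dim\pi_A(V)=|A|-\dim(\mathrm{row}(M)\cap(\mathbb F_2^{A}\oplus 0))$. The subspace of row vectors supported on $A$ is the set $\{uM: uM_B=0\}$. Its dimension is $\dim\ker(M_B^T\text{ restricted to the row space})$, which equals $\operatorname{rank}(M)-\operatorname{rank}(M_B)$. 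This follows from rank--nullity applied to the map $\mathrm{row}(M)\to\mathrm{row}(M_B)$ that projects onto the $B$ columns, which is surjective with image of dimension $\operatorname{rank}(M_B)$. Hence $H(A)=|A|-\operatorname{rank}(M)+\operatorname{rank}(M_B)$, and symmetrically $H(B)=|B|-\operatorname{rank}(M)+\operatorname{rank}(M_A)$.

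Finally, I would add these entropies:
\begin{equation}
I=H(A)+H(B)-H(AB)=\operatorname{rank}(M_A)+\operatorname{rank}(M_B)-\operatorname{rank}(M).
\end{equation}
This expression depends only on the linear parts, so it is syndrome-independent. As a sanity check, the quantity lies between $0$ and $\min(|A|,|B|)$, consistent with $I\le\log_2\operatorname{rank}Q_{AB}$ from the proof of Theorem~1. Substituting into Theorem~1 then yields the stated bound on $\gamma_m$, with the same finite-precision and Lipschitz conventions hidden in $\gtrsim$.
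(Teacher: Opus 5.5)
Your argument is correct and has the same skeleton as the paper's stabilizer appendix: $P$ is uniform on $\{z: Mz=s\}$, $H(AB)=n-r$ with $r=\operatorname{rank}(M)$, the marginals are uniform, you subtract, and then invoke the VB scaling law. Both ingredients, however, are obtained differently.

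\emph{Uniformity.} The paper expands $\rho=2^{-n}\sum_{g\in S}g$, keeps only the pure-$Z$ subgroup on the diagonal, and evaluates a character sum. This gives the support and the constant value $2^{-(n-r)}$ at once, without any tableau normal form. Your translation argument relies on one fact you leave implicit: the $X$-parts of the non-$Z$ generators must span all of $V=\ker M$, so that the translations act transitively on $\mathcal A$. This holds, since commutation with the $Z$-rows puts the $X$-parts in $\ker M$, and independence of the $n$ generators gives them dimension $n-r$. It should still be stated.

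\emph{Marginals.} The paper simply counts fibres: for fixed $a$, the compatible $b$ form a coset of $\ker M_B$, which gives $P_A(a)=2^{-(|A|-r+\operatorname{rank}M_B)}$ directly. Your dual route, via $\dim\pi_A(V)=|A|-\dim\bigl(\mathrm{row}(M)\cap(\mathbb F_2^{A}\oplus 0)\bigr)$ and rank--nullity on the projection $\mathrm{row}(M)\to\mathrm{row}(M_B)$, is a bit longer. In exchange, it gives a structural reading that the paper only asserts heuristically in the checkerboard appendix. Let $d_A=r-\operatorname{rank}(M_B)$ and $d_B=r-\operatorname{rank}(M_A)$ be the dimensions of the constraints supported entirely in $A$ or entirely in $B$. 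Then the formula becomes $I=r-d_A-d_B$, the dimension of the constraint space modulo one-sided constraints. This is the precise version of ``only crossing checks contribute.''
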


We next test the VB scaling law numerically in three canonical NQS settings: quantum tomography, ground-state learning, and finite-temperature learning.
Across controlled families with different power-law growth of $\mathcal I(n)$, the required RNN size follows the predicted exponents.
Direct comparisons between RNN and Transformer architectures further reveal architecture-dependent growth consistent with the theorem.
Guided by the same principle, we also show that basis permutations can reduce amplitude complexity and thereby lower learning difficulty without increasing model size.

\textit{Quantum tomography.}
We first test the VB scaling law in supervised quantum tomography, a standard application of neural-network quantum states \cite{torlai2018neural}.
Stabilizer states provide a controlled benchmark because their middle-cut amplitude complexity can be computed analytically and tuned across families.
Throughout this section, we use the stabilizer rank formula derived in \cref{app:stab-cmi}.

As a benchmark on an $L\times L$ lattice, we consider a checkerboard stabilizer family in which the number of parity checks scales as $L^{2\gamma}$.
For this family, the middle-cut amplitude complexity in the $Z$ basis scales as $
\mathcal I(L)\propto L^\gamma .$
An explicit construction and the corresponding scaling analysis are provided in \cref{app:checkerboard-construction}.

\begin{figure}[ht]
  \centering
  \includegraphics[width=\linewidth]{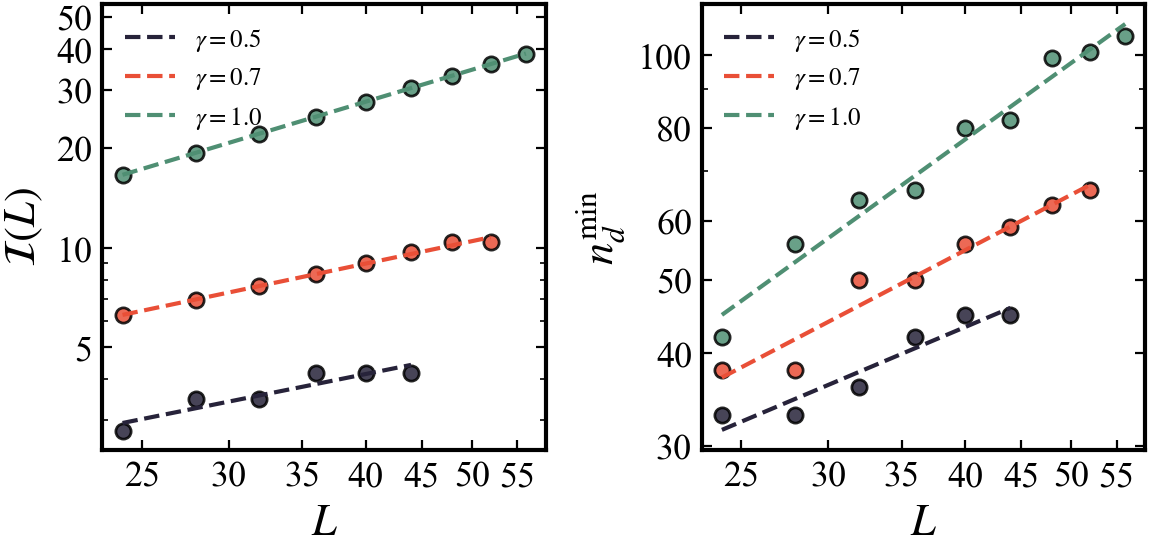}
  \caption{Application I: checkerboard-stabilizer tomography.
  Left, log-log plot of the half-cut CMI $\mathcal I(L)$ versus $L$ for tunable families with number of parity checks scales as $L^{2\gamma}$; fitted slopes are $0.670$, $0.702$, and $1.000$ for $\gamma=0.5,0.7,1.0$.
  Right, log-log plot of the minimal RNN hidden width $\hidmin(L)$ required to reach $95\%$ fidelity; fitted slopes are $0.620$, $0.767$, and $1.057$.
  The close agreement between the two sets of exponents shows that the required RNN VB dimension tracks the scaling of the target amplitude complexity.}
  \label{fig:graph_tuning}
\end{figure}

For each family, we perform supervised quantum tomography with an RNN and extract the minimal hidden width $\hidmin$ required to reach $95\%$ fidelity.
Since $\gamma_i=n_d$ for an RNN, $\hidmin(L)$ serves as a direct proxy for the minimum virtual-bond dimension.
As shown in \Cref{fig:graph_tuning}, the left panel exhibits clearly separated power-law growth of $\mathcal I(L)$ for the three tunable families, with larger $\gamma$ producing systematically steeper complexity scaling.
The right panel displays the same hierarchy in the required RNN width: the $\gamma=1.0$ family grows fastest, the $\gamma=0.7$ family is intermediate, and the $\gamma=0.5$ family grows slowest.
The fitted exponents of $\hidmin(L)$ closely track those of $\mathcal I(L)$ across all tested $\gamma$, in quantitative agreement with the VB scaling law.

\textit{Ground-state learning.}
We next test the VB scaling law in a ground-state learning task.
This setting isolates the role of architecture: the target-side scaling is fixed, while the virtual-bond organization differs between RNNs and autoregressive Transformers.
It provides insight for comparing how different autoregressive architectures realize the required virtual-bond dimension for the same target state.

We consider the $\mathbb Z_2$  $L\times L$ toric code with Hamiltonian
\begin{equation}
H_{\mathrm{TC}}=-\sum_s A_s-\sum_p B_p,
\end{equation}
where $A_s=\prod_{e\in +_s}X_e$ and $B_p=\prod_{e\in\partial p}Z_e$.
The ground-state is fixed to a single state by adding the two non-contractible $X$-loop operators. In the $Z$ basis, this state takes the form
\begin{equation}
\ket{\psi_{\mathrm{TC}}}
=\frac{1}{\sqrt{|\Omega|}}\sum_{\bm s\in\Omega}\ket{\bm s},
\qquad
P^{(Z)}_{\mathrm{TC}}(\bm s)=\frac{1}{|\Omega|}\,\mathbf 1_{\bm s\in\Omega},
\label{eq:toric-uniform}
\end{equation}
where $\Omega$ is the set of configurations satisfying all stabilizer parity constraints.
These constraints couple the two sides of a cut and therefore produce linear growth of the middle-cut amplitude complexity.

The stabilizer rank formula can be specialized to the toric code, as shown in \cref{app:toric-cmi}.
For a half-system cut, one obtains $\mathcal I(L)=\Theta(L)$
with the explicit finite-size form $\mathcal{I}(L)=2L-1$ for even $L$ and $2L$ for odd $L$.
By lattice $C_4$ symmetry, the same scaling holds for both vertical and horizontal cuts.

\begin{figure}[ht]
  \centering
  \includegraphics[width=\linewidth]{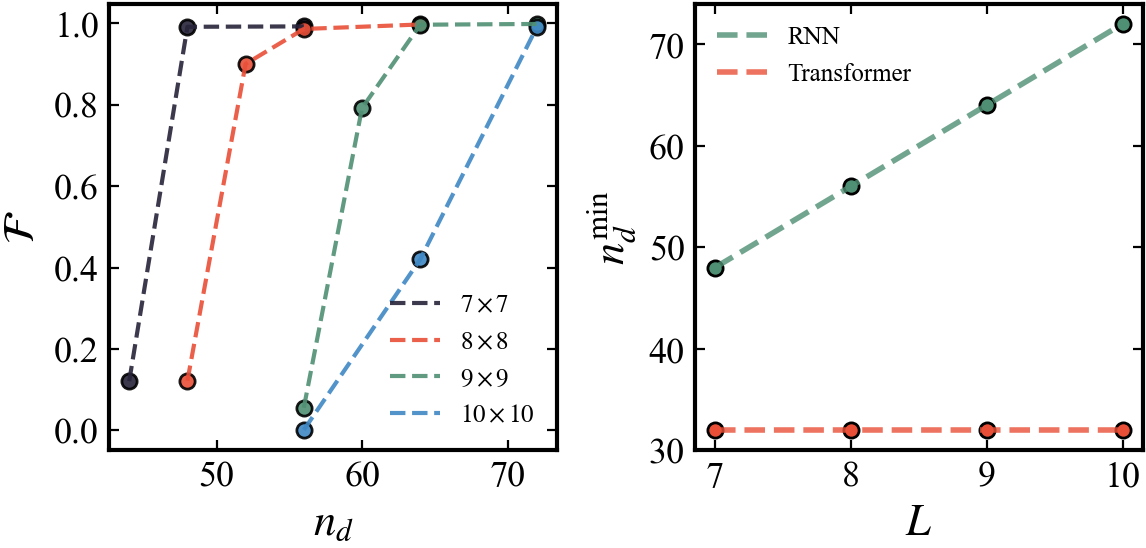}
  \caption{Application II: toric-code benchmark.
  Left, fidelity versus hidden width $n_d$ for RNN training on system sizes from $7\times7$ to $10\times10$.
  Right, minimal width $n_d^{\min}$ versus linear size $L$ at fixed target fidelity ($95\%$) for RNN and autoregressive Transformer models.
  The required RNN width grows with system size, whereas the Transformer width remains nearly constant over the tested range, consistent with the different scaling of virtual-bond dimension in recurrent and attention-based architectures.}
  \label{fig:toric_vmc}
\end{figure}

The VB scaling law therefore predicts that exact recurrent representations of this family require growing virtual-bond dimension with $L$.
The numerics in \Cref{fig:toric_vmc} support this picture.
In the left panel, the RNN fidelity curves shift systematically toward larger hidden width as the system size increases, indicating that a larger recurrent bottleneck is needed to represent the toric-code accurately.
The right panel summarizes this trend through the extracted minimum width $n_d^{\min}$ at fixed target fidelity.
For RNNs, $n_d^{\min}$ grows clearly with system size, reflecting the need for an expanding recurrent virtual bond.
By contrast, autoregressive Transformers achieve comparable fidelity with nearly constant width over the tested range, because their effective virtual-bond dimension also grows with sequence length through the key--value cache.

\textit{Finite-temperature representation.}
We finally test the VB scaling law in finite-temperature learning, an important application of neural-network quantum states \cite{irikura2020neural,nys2024real,nys2025fermionic,kumar2026autoregressive}.
Here the target family is fixed, while the autoregressive ordering can be varied, allowing a direct test of how basis ordering reshapes the mutual information complexity and hence the required virtual-bond scaling.
As a temperature-tunable example, we study thermofield-double (TFD) states in a fixed computational basis.
For a Hamiltonian $H$, the TFD state is
\begin{align}
\ket{\mathrm{TFD}_\beta}
&= \frac{1}{\sqrt{Z(\beta)}}\sum_{\mu} e^{-\beta E_\mu/2}\,\ket{\mu}_A\otimes\ket{\mu}_B,
\end{align}
where $Z(\beta)=\sum_\mu e^{-\beta E_\mu}$.
In the doubled computational basis $(a,b)\in \mathcal{H}_A\otimes\mathcal{H}_B$, the dephased probability distribution is
\begin{equation}
\mathbb P_\beta(a,b)=\frac{\left|\bra{a}e^{-\beta H/2}\ket{b}\right|^2}{Z(\beta)}.
\end{equation}

We instantiate $H$ as the spinless $p$-wave BCS chain,
\begin{equation}
H
=\sum_{j=1}^{n} \left[-J\left(c_j^\dagger c_{j+1} + c_j^\dagger c_{j+1}^\dagger\right) +2h\left(c_j^\dagger c_j-\tfrac12\right)\right],
\end{equation}
At $\beta=0$, one has $\mathbb P_0(a,b)=2^{-n}\delta_{ab}$, so each site pair $(a_i,b_i)$ is maximally correlated.
For small finite $\beta$, the expansion $e^{-\beta H/2}=\mathbb I-(\beta/2)H+O(\beta^2)$ implies that off-diagonal weights remain perturbative, with $\mathbb P_\beta(a\neq b)=O(\beta^2)$.
Thus, at small finite $\beta$, the middle-cut CMI is still controlled mainly by the copy-to-copy pair correlations inherited from $\beta=0$.
It is therefore large when the ordering places many such pairs across the cut, and small when it keeps them local in sequence space.
Additional analytical perturbative calculation and numerical results in finite and high temperature regimes are given in \cref{app:TFIM}.

The above analysis motivates two autoregressive orderings of the doubled bitstring,
\begin{equation}
\begin{aligned}
\text{separate:}\quad & a_1a_2\cdots a_n\,b_1b_2\cdots b_n,\\
\text{alternate:}\quad & a_1b_1a_2b_2\cdots a_nb_n.
\end{aligned}
\end{equation}
The alternate ordering places the dominant pair correlations locally in sequence space, whereas the separate ordering spreads them over long distances.
The VB scaling law therefore predicts sharply different virtual-bond dimension requirements: near-linear growth for the separate ordering and near-constant behavior for the alternate ordering over the tested regime.

\begin{figure}[ht]
  \centering
  \includegraphics[width=\linewidth]{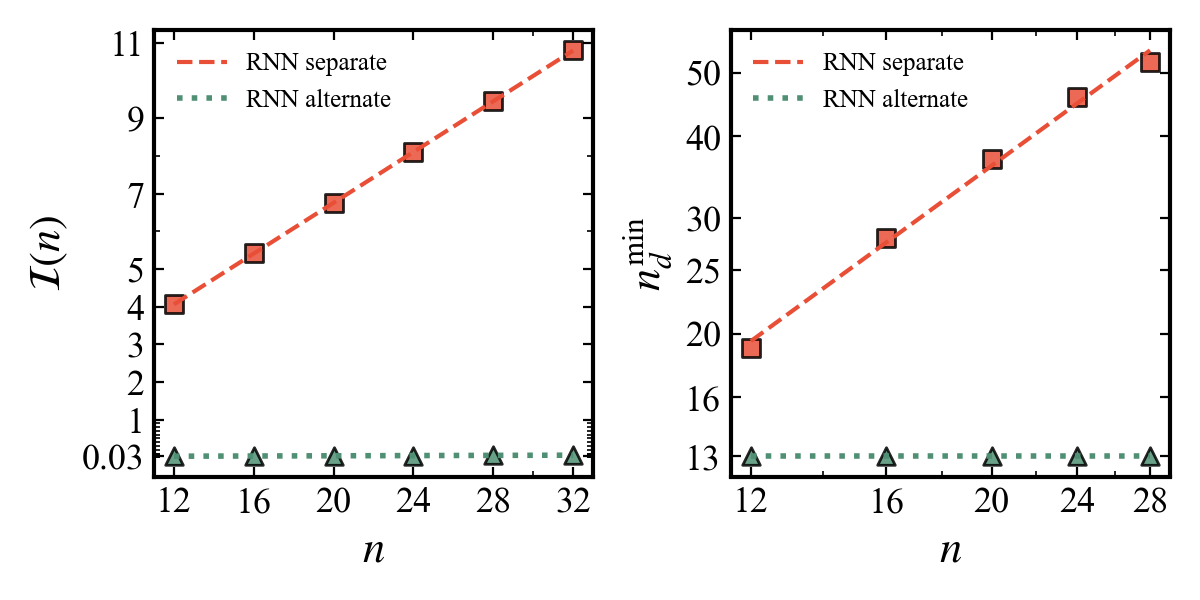}
  \caption{Application III: finite-temperature representation at $J=1, h=0.6,\beta=0.1$,
  Left, plot of the half-cut CMI versus $n$ for two orderings of the doubled basis.
  The numerical method for the CMI calculation is shown in \Cref{app:TFD-fermion-amplitude}.
  The separate ordering is nearly linear (fitted slope $0.997$, $\mathcal{I}\propto n^{0.997}$), whereas the alternate ordering remains small and nearly constant.
  Right, minimal RNN hidden width $n_d^{\min}$ at fixed target fidelity for the same two orderings.
  For the separate ordering, $n_d^{\min}$ grows with $n$ (fitted slope $1.205$, $n_d^{\min}\propto n^{1.205}$); for the alternate ordering, the target fidelity is reached with nearly constant hidden width over the tested range.}
  \label{fig:finite-temp}
\end{figure}

We test this prediction by RNN training at fixed target fidelity and extract the minimal hidden width $n_d^{\min}$ for each system size.
As shown in \Cref{fig:finite-temp}, the two orderings exhibit sharply different behavior in both amplitude complexity and required hidden width.
At $\beta=0.1$, the separate ordering yields near-linear growth of $\mathcal I(n)$ and correspondingly increasing $n_d^{\min}$.
By contrast, the alternate ordering keeps $\mathcal I(n)$ small and nearly constant, and reaches the target fidelity with nearly constant hidden width over the tested range. This result highlights the crucial role of efficient autoregressive ordering in finite-temperature simulations.

\textit{Conclusion.}
We have established an information-theoretic scaling law that governs the representational power of neural quantum states. By identifying the middle-cut mutual information of the quantum state amplitude as a fundamental complexity measure, we derive a rigorous virtual-bond (VB) scaling law for exact autoregressive representations of quantum states. This framework elevates neural-network design from empirical tuning to a physically grounded principle, directly linking the intrinsic correlation structure of a quantum many-body state to the minimal network capacity required for its faithful representation. Analytical results for stabilizer states, together with numerical validations across quantum-state tomography, topological ground states, and finite-temperature learning, confirm the predicted scaling behavior. These results further reveal differences between RNN and Transformer neural quantum state, and demonstrate how autoregressive ordering reshapes the representation cost.

These findings lay the foundation for a predictive complexity theory of neural quantum states. It paves the way for systematic analysis across a broad class of architectures, including 2D RNNs \cite{hibat2020recurrent}, PixelCNN \cite{sharir2020deep}, and Vision Transformer–based NQS \cite{viteritti2023transformer}. While the present scaling law governs autoregressive constructions, extending this framework to more general neural quantum states, such as neural-network backflow~\cite{luo2019backflow} and real-space wavefunction representations~\cite{pfau2020prr}, remains an important direction. More fundamentally, developing a unified theory of neural quantum state representation that treats both amplitude and phase complexity is an open challenge. Beyond representability in equilibrium systems, integrating these information-theoretic constraints with quantum learning dynamics could provide a principled route toward scaling neural-network approaches to increasingly complex quantum many-body systems.

\textit{Acknowledgements.} The authors acknowledge insightful discussions with Zhuo Chen. SB and DR gratefully acknowledge the Mani L. Bhaumik Institute for Theoretical Physics for supporting this work.
\bibliographystyle{apsrev4-1}
\bibliography{refs.bib}

\newpage

\appendix 
\crefalias{section}{appendix}

 \clearpage

\onecolumngrid
\begin{center}
\noindent\textbf{Supplementary Material}
\bigskip

\noindent\textbf{\large{}}
\end{center}
\onecolumngrid
\setcounter{secnumdepth}{2}
\renewcommand{\thefigure}{S\arabic{figure}}
\section{Virtual-Bond Definition and Architecture Mapping}
\label{app:arnn-vb-definition}
This appendix makes the virtual-bond variable in the VB Scaling Law explicit and ties it to the two autoregressive architectures used in the numerics.
An ARNN-NQS parameterizes amplitudes as
\begin{equation}
|\psi_\theta(\bm s)|^2=\prod_{i=1}^n |\psi_\theta(s_i|\bm s_{<i})|^2.
\end{equation}
At a cut position $i$, a variable $z_i$ is a valid \emph{virtual bond} if it is sufficient for suffix prediction:
\begin{equation}
|\psi_\theta(\bm s_{> i}|\bm s_{\leq i}|^2=|\psi_\theta(\bm s_{>i}|z_i)|^2.
\end{equation}
Here $z_i$ is assumed to take values in $\mathbb{R}^{\gamma_i}$ and $\gamma_i$ is denoted as the virtual-bond dimension.
This variable is called a virtual bond because it plays the same structural role as a bond index in tensor-network factorizations.
Once the cut is fixed, all dependence between prefix and suffix is mediated through this internal channel.
Therefore $\gamma_i$ is the relevant virtual-bond dimension in the VB Scaling Law.
\Cref{structure-ARNN} summarizes this channel in the RNN and autoregressive Transformer used in this work.

\begin{figure}[ht]
    \centering
\begin{tikzpicture}[
    scale=0.75,
    transform shape,
    x=0.92cm,y=0.92cm,
    font=\sffamily,
    >=Latex,
    line width=0.9pt,
    title/.style={font=\bfseries\fontsize{18}{20}\selectfont},
    smalllab/.style={font=\fontsize{10.5}{12}\selectfont},
    mathlab/.style={font=\fontsize{11}{13}\selectfont},
    boxpink/.style={draw=black, rounded corners=3pt, fill=pink!18, minimum width=0.8cm, minimum height=0.7cm, inner sep=0pt},
    boxpurple/.style={draw=black, rounded corners=4pt, fill=blue!18!violet!18, minimum width=1.55cm, minimum height=1.02cm, inner sep=0pt},
    boxyellow/.style={draw=black, rounded corners=4pt, fill=orange!35, minimum width=2.25cm, minimum height=1.02cm, inner sep=2pt},
    boxstack/.style={draw=black, rounded corners=6pt, fill=orange!35, minimum width=2.95cm, minimum height=2.5cm, inner sep=4pt},
    bubble/.style={draw=black, rounded corners=10pt, fill=gray!12, minimum width=4.3cm, minimum height=2.95cm, inner sep=6pt},
    innerpink/.style={draw=black, rounded corners=2pt, fill=magenta!20, minimum width=2.2cm, minimum height=0.56cm, inner sep=1pt},
    innerpurple/.style={draw=black, rounded corners=2pt, fill=violet!25, minimum width=2.2cm, minimum height=0.56cm, inner sep=1pt},
    inneryellow/.style={draw=black, rounded corners=2pt, fill=yellow!70!orange!40, minimum width=2.2cm, minimum height=0.56cm, inner sep=1pt},
]

\node[title] at (3.8,4.9) {\large RNN};
\node[title] at (14.2,4.9) {\large Autoregressive Transformer};

\node[boxpink]   (sleft) at (2.5,3.8) {$s_{i-1}$};
\node[boxpurple] (gru1)  at (2.5,2.3) {\Large Cell};
\node[boxpink]   (si)    at (5.5,3.8) {$s_i$};
\node[boxpurple] (gru2)  at (5.5,2.3) {\Large Cell};
\coordinate (gru3) at (7.5,2.3);
\coordinate (gru0) at (0.5,2.3);

\draw[->] (sleft) -- (gru1);
\draw[->] (si) -- (gru2);

\draw[->] (gru0) -- node[above,smalllab] {$h_{i-2}$} (gru1);
\draw[->] (gru1) -- node[above,smalllab] {$h_{i-1}$} (gru2);
\draw[->] (gru2) -- node[pos=0.4, above,smalllab] {$h_i$} (gru3);
\draw[->] (gru1) -- ++(0,-1.05) node[below,mathlab] {$p_\theta(s_{i}|s_{<i})$};
\draw[->] (gru2) -- ++(0,-1.05) node[below,mathlab] {$p_\theta(s_{i+1}|s_{<i+1})$};

\node[bubble, minimum width=5cm, minimum height=2.5cm] (rbub) at (4.1,-1.3) {};
\node[mathlab] at (4.1,-0.6) {$h_i \in \mathbb{R}^{n_d}$};
\node[boxpurple] at (3.9,-1.6) {\Large Cell};
\node[mathlab] at (5.62,-1.6) {$\sim O(n_d^2)$};
\node[mathlab] at (2.2,-1.6) {Params};

\node[boxpink]   (sctx)  at (11.7,3.8) {$s_{i-1}$};
\node[boxyellow] (enc1)  at (11.7,2.3) {\Large Encoder};

\node[boxpink]   (sit)   at (16.2,4.2) {$s_i$};
\node[boxstack, minimum height=2.0cm]  (stack) at (16.2,2.3) {};
\node[innerpink]   at (16.2,2.92) {QKV-Map};
\node[innerpurple] at (16.2,2.30) {ATTN};
\node[inneryellow] at (16.2,1.68) {FF};
\coordinate (enc3) at (19.3, 2.3);
\coordinate (enc0) at (8.8, 2.3);

\draw[->] (enc0) -- node[above, pos=0.49, mathlab] {$kv_{< i-1}$} (enc1.west);
\draw[->] (sctx) -- (enc1);
\draw[->] (enc1) -- node[above,mathlab, pos=0.45] {$kv_{<i}$} (stack.west);
\draw[->] (sit) -- (stack.north);
\draw[->] (enc1) -- ++(0,-1.05) node[below,mathlab] {$p_\theta(s_{i}|s_{<i})$};
\draw[->] (stack) -- ++(0,-1.53) node[below,mathlab] {$p_\theta(s_{i+1}|s_{<i+1})$};

\draw[->] (stack.east) -- node[above, pos=0.43, mathlab] {$kv_{\leq i}$} (enc3);

\node[bubble, minimum width=6cm, minimum height=2.6cm] (rbub1) at (14.1,-1.3) {};
\node[mathlab] at (14.2,-0.6) {$q_i, v_i, k_i \in \mathbb{R}^{n_d}$};
\node[boxyellow] at (14.0,-1.6) {\Large Encoder};
\node[mathlab] at (16.2,-1.6) {$\sim O(n_d^2)$};
\node[mathlab] at (11.9,-1.6) {Params};

\end{tikzpicture}
\caption{Virtual-bond realizations for the two autoregressive architectures used in numerics.
Left panel, RNN: the recurrent update transmits information through $h_{i-1}\!\to h_i$, so the recurrent state is the virtual bond.
Right panel, autoregressive Transformer: prefix information is carried by the cached key-value memory through $kv_{<i}\!\to kv_{\le i}$, so the cache is the virtual bond.
The figure gives the model-specific map from theorem-level dimension $\gamma_i$ to architecture size.}
\label{structure-ARNN}
\end{figure}
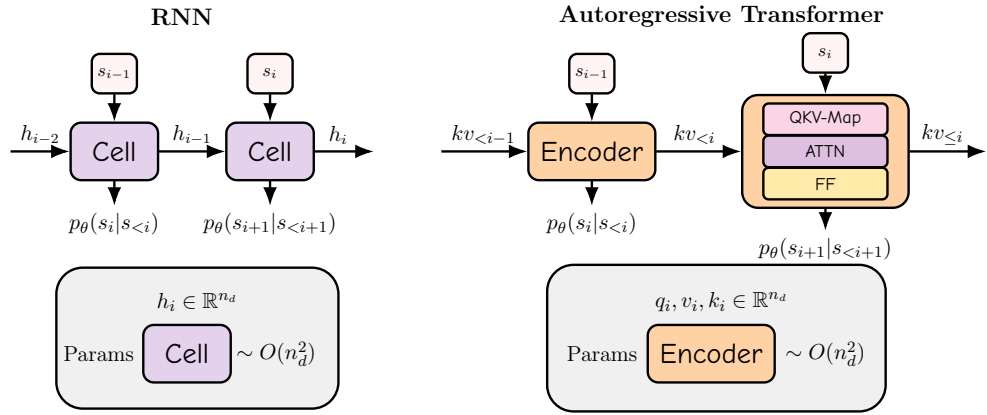

For the RNN family used in numerics, the recurrent update can be written as
\begin{equation}
h_i=F_\theta(h_{i-1},s_i),\qquad
p_\theta(s_{i+1}|\bm s_{\le i})=G_\theta(h_i).
\end{equation}
Hence prefix information reaches future predictions only through $h_i$.
The virtual bond is therefore
\begin{equation}
z_i=h_i\in\mathbb R^{n_d},
\end{equation}
so $\gamma_i=n_d$.

For the autoregressive Transformer family, let
\begin{equation}
\mathcal K_i=\{(k_1,v_1),\ldots,(k_i,v_i)\}
\end{equation}
denote the cached key-value history at cut $i$.
The next-token predictor depends on the prefix through this cache, so the virtual bond is $z_i=\{q_i, \mathcal K_i\}\in \mathbb{R}^{n_d\times (2i+1)}$.
At fixed depth and width, each cached token contributes a fixed number of coordinates proportional to $n_d$, while the number of cached tokens grows with $i$.
Therefore the effective virtual-bond dimension can grow with sequence length even at fixed width.
This is exactly the mechanism tested by the RNN and Transformer comparison in the main text.

\section{Proof of the VB Scaling Law}
\label{app:vb-law-proof}

We prove the theorem in the main text using the ordinary rank of the joint-distribution matrix.
Throughout, we fix the middle cut
\begin{equation}
m:=\lfloor n/2\rfloor,
\end{equation}
and write
\begin{equation}
A:=\{\bm s_{\le m}\},\qquad B:=\{\bm s_{>m}\},
\end{equation}
so that
\begin{equation}
\mathcal I(n)=I_{P_n}(A\!:\!B).
\end{equation}
Since the ARNN-NQS family is assumed to represent the target amplitude distribution exactly, we identify
\begin{equation}
P_\theta(\bm s)=P_n(\bm s)
\end{equation}
for each system size $n$.

By definition, a valid virtual bond $z_t$ must fully characterize the dependence of the future on the past.
At the middle cut $m$, this means
\begin{equation}
P_n(B\!\mid\!A)=P_\theta(B\!\mid\!A)=P_\theta(B\!\mid\!z_m(A)).
\label{eq:appendix-vb-sufficiency}
\end{equation}

We first recall a standard rank-to-mutual-information inequality.

\begin{lemma}[Mutual information is bounded by ordinary rank]
\label{lem:MI-rank-appendix}
Let $Q_{AB}$ be the joint-distribution matrix of two finite random variables $A,B$, namely
\begin{equation}
(Q_{AB})_{a,b}=P(a,b).
\end{equation}
Then
\begin{equation}
I(A\!:\!B)\le \log_2 \operatorname{rank}(Q_{AB}).
\label{eq:appendix-MI-rank}
\end{equation}
\end{lemma}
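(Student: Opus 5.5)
We will prove the bound with a $\chi^2$-type argument on a normalized version of $Q_{AB}$. The bound itself is invariant under removing zero rows and columns of $Q_{AB}$. So the plan starts by restricting to the supports of the marginals $p_A(a)=\sum_b P(a,b)$ and $p_B(b)=\sum_a P(a,b)$. Deleting zero rows and columns changes neither the rank nor the mutual information. Hence we may assume $p_A(a)>0$ and $p_B(b)>0$ for all $a,b$.

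\emph{Step 1 (normalize).} Define
\begin{equation}
M:=D_A^{-1/2}\,Q_{AB}\,D_B^{-1/2},\qquad D_A=\operatorname{diag}(p_A),\quad D_B=\operatorname{diag}(p_B).
\end{equation}
The diagonal factors are invertible, so $\operatorname{rank}(M)=\operatorname{rank}(Q_{AB})=:r$.

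\emph{Step 2 (contraction).} Next we show that every singular value of $M$ is at most $1$. For unit vectors $x,y$, Cauchy--Schwarz with weights $P(a,b)$ gives
\begin{equation}
|x^{T}My|=\Big|\sum_{a,b}P(a,b)\frac{x_a}{\sqrt{p_A(a)}}\frac{y_b}{\sqrt{p_B(b)}}\Big|
\le\Big(\sum_{a,b}P(a,b)\frac{x_a^2}{p_A(a)}\Big)^{1/2}\Big(\sum_{a,b}P(a,b)\frac{y_b^2}{p_B(b)}\Big)^{1/2}.
\end{equation}
The first factor equals $\|x\|=1$ and the second equals $\|y\|=1$. Hence $\|M\|_{\mathrm{op}}\le 1$. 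Since $M$ has only $r$ nonzero singular values, it follows that
\begin{equation}
\|M\|_F^2=\sum_{i=1}^{r}\sigma_i^2\le r.
\end{equation}

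\emph{Step 3 (link to mutual information).} Write $I(A\!:\!B)=\mathbb E_{P}\big[\log_2 \tfrac{P(a,b)}{p_A(a)p_B(b)}\big]$. Since $\log_2$ is concave, Jensen's inequality gives
\begin{equation}
I(A\!:\!B)\le\log_2\mathbb E_P\Big[\frac{P(a,b)}{p_A(a)p_B(b)}\Big]=\log_2\sum_{a,b}\frac{P(a,b)^2}{p_A(a)p_B(b)}=\log_2\|M\|_F^2.
\end{equation}
The middle sum is $1+\chi^2(P_{AB}\Vert P_A\otimes P_B)$. Combining with Step 2 yields $I(A\!:\!B)\le\log_2 r=\log_2\operatorname{rank}(Q_{AB})$, which is \cref{eq:appendix-MI-rank}.

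The expected main obstacle is Step 2. A naive argument only controls the nonnegative rank, through a latent-variable decomposition and data processing. The ordinary rank enters only through the count of nonzero singular values, so the whole argument hinges on showing $\|M\|_{\mathrm{op}}\le1$. The plan handles this with the weighted Cauchy--Schwarz estimate above. Equivalently, $M$ is a contraction because $\sqrt{p_A}$ and $\sqrt{p_B}$ form its top singular pair with singular value $1$.
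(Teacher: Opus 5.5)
Your proposal is correct and follows the paper's proof: the same normalized matrix $M_{ab}=P(a,b)/\sqrt{p_A(a)p_B(b)}$, the bound $\|M\|_F^2\le\operatorname{rank}(Q_{AB})$ from all singular values being at most $1$, and $I(A\!:\!B)\le\log_2(1+\chi^2)=\log_2\|M\|_F^2$. You go beyond the paper in three places: you prove the contraction via weighted Cauchy--Schwarz and the $\chi^2$ bound via Jensen, both of which the paper only asserts, and you handle zero marginals by restricting to the supports, which the paper leaves implicit.
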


\begin{proof}
Let $p_A(a)=\sum_b P(a,b)$ and $p_B(b)=\sum_a P(a,b)$, and define
\begin{equation}
M_{ab}:=\frac{P(a,b)}{\sqrt{p_A(a)p_B(b)}}.
\end{equation}
Then $\operatorname{rank}(M)=\operatorname{rank}(Q_{AB})$.
Moreover, the largest singular value of $M$ is at most $1$, hence
\begin{equation}
\|M\|_F^2\le \operatorname{rank}(M)=\operatorname{rank}(Q_{AB}).
\end{equation}
On the other hand,
\begin{equation}
\chi^2\!\left(P_{AB}\,\middle\|\,p_Ap_B\right)
=
\sum_{a,b}\frac{P(a,b)^2}{p_A(a)p_B(b)}-1
=
\|M\|_F^2-1.
\end{equation}
Using
\begin{equation}
I(A\!:\!B)
=
D\!\left(P_{AB}\,\middle\|\,p_Ap_B\right)
\le
\log_2\!\Bigl(1+\chi^2(P_{AB}\|p_Ap_B)\Bigr),
\end{equation}
we obtain
\begin{equation}
I(A\!:\!B)\le \log_2 \|M\|_F^2\le \log_2 \operatorname{rank}(Q_{AB}).
\end{equation}
\end{proof}

\subsection{Finite precision assumption}

We first prove the main-text theorem under a finite-precision assumption on the virtual bond.

\begin{theorem}[VB Scaling Law, Finite Precision Assumption]
\label{thm:VB-finite-precision-rank}
Assume that any middle-cut virtual bond $z_m(A)\in\mathbb{R}^{\gamma_m}$, where $m=\lfloor n/2\rfloor$ and $A=\{s_1,\cdots,s_m\}$, is represented by $\gamma_m$ scalars with $b$ bits per scalar. Then
\begin{equation}
\mathcal I(n)\le b\gamma_m,
\label{eq:finite-precision-main}
\end{equation}
which implies
\begin{equation}
\gamma_m\ge \frac{\mathcal I(n)}{b}.
\end{equation}
Therefore, when the precision $b$ is fixed independently of $n$, $\gamma_m$ and $\mathcal I(n)$ have the same polynomial scaling exponent.
\end{theorem}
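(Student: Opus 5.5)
We argue through the rank bound of Lemma~\ref{lem:MI-rank-appendix}, applied to the middle-cut joint-probability matrix $Q_{AB}$ with entries $(Q_{AB})_{a,b}=P_n(a,b)$, where $a$ ranges over prefix configurations $\bm s_{\le m}$ and $b$ over suffix configurations $\bm s_{>m}$. If we can show $\operatorname{rank}(Q_{AB})\le 2^{b\gamma_m}$, the lemma gives $\mathcal I(n)\le \log_2\operatorname{rank}(Q_{AB})\le b\gamma_m$. This is Eq.~\eqref{eq:finite-precision-main}, and dividing by $b$ gives the second claim.

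\textbf{Step 1: factor each row of $Q_{AB}$ through the virtual bond.} By exactness and the sufficiency relation Eq.~\eqref{eq:appendix-vb-sufficiency}, every row satisfies
\begin{equation}
(Q_{AB})_{a,b}=P_n(a)\,P_n(b\mid a)=P_n(a)\,P_\theta\bigl(b\mid z_m(a)\bigr).
\end{equation}
Rows with $P_n(a)=0$ vanish and can be dropped. So each nonzero row is a positive scalar multiple of the row vector $r_{z}:=\bigl(P_\theta(b\mid z)\bigr)_b$ with $z=z_m(a)$. Written as a matrix product, $Q_{AB}=D\,E\,R$, where $D=\operatorname{diag}(P_n(a))$, $E$ is the $0/1$ assignment matrix $a\mapsto z_m(a)$, and $R$ stacks the vectors $r_z$, one for each attained value of $z$.

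\textbf{Step 2: count the distinct bond values.} Under the finite-precision assumption, $z_m(a)$ is a tuple of $\gamma_m$ scalars, each of which takes at most $2^b$ values. Hence $z_m$ ranges over a set $\mathcal Z$ with $|\mathcal Z|\le 2^{b\gamma_m}$. The matrix $R$ then has at most $|\mathcal Z|$ rows, so $\operatorname{rank}(Q_{AB})\le\operatorname{rank}(R)\le|\mathcal Z|\le 2^{b\gamma_m}$. Equivalently, $A\to z_m(A)\to B$ is a Markov chain, and the data-processing inequality gives $I(A\!:\!B)\le I(z_m\!:\!B)\le H(z_m)\le b\gamma_m$. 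This is an independent route to the same conclusion, useful as a cross-check.

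\textbf{Step 3: state the scaling consequence.} If $b$ is independent of $n$ and $\mathcal I(n)\sim n^\beta$, then $\gamma_m\ge\mathcal I(n)/b$ forces $\gamma_m$ to grow at least as fast as $n^\beta$. This is the claimed matching of polynomial exponents, understood as a lower bound: the growth is at least that fast, with $\gamma_m=\Omega(n^\beta)$.

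\textbf{Main obstacle.} The delicate point is Step 1, namely making precise that sufficiency is a statement about $P_\theta$ as a function of $z_m(a)$ alone. We must ensure that the conditional suffix distribution $P_\theta(\cdot\mid z)$ is a well-defined function on $\mathcal Z$, so that two prefixes with the same quantized bond produce identical rows up to scale. We take this directly from the definition of a valid virtual bond, which states that $z_m$ is a sufficient statistic whose values are the finite-precision representations. Everything after that is counting.
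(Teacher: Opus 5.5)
Your proposal is correct and matches the paper's proof: the paper also factors $Q_{AB}=M_{\mathrm{left}}M_{\mathrm{right}}$, with $M_{\mathrm{left}}(A,z)=P_n(A)\,\delta_{z,z_m(A)}$ (your $DE$) and $M_{\mathrm{right}}(z,B)=P_\theta(B\mid z)$ (your $R$), bounds the rank by $|\mathcal Z_m|=2^{b\gamma_m}$, and applies Lemma~\ref{lem:MI-rank-appendix}. Your data-processing cross-check $I(A\!:\!B)\le I(z_m\!:\!B)\le H(z_m)\le b\gamma_m$ is a valid and more elementary alternative that bypasses the rank lemma, and you are right that the exponent claim only gives a lower bound on $\gamma_m$.
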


\begin{proof}
Let $Q_{AB}$ be the matrix representation of the joint distribution $P_n(A,B)$, namely
\begin{equation}
(Q_{AB})_{A,B}=P_n(A,B).
\end{equation}
Under the finite-precision assumption, the virtual bond $z_m$ can take values only in a finite set $\mathcal Z_m$ with cardinality
\begin{equation}
|\mathcal Z_m|=2^{b\gamma_m}.
\end{equation}

We now factorize $Q_{AB}$ through the intermediate virtual-bond index $z\in\mathcal Z_m$. Define
\begin{align}
M_{\mathrm{left}}(A,z) &:= P_n(A)\,\delta_{z,z_m(A)},\\
M_{\mathrm{right}}(z,B) &:= P_\theta(B\!\mid\!z).
\end{align}
Then
\begin{equation}
Q_{AB}=M_{\mathrm{left}}M_{\mathrm{right}},
\end{equation}
because for every $(A,B)$,
\begin{equation}
(Q_{AB})_{A,B}
=
\sum_{z\in\mathcal Z_m} M_{\mathrm{left}}(A,z)M_{\mathrm{right}}(z,B)
=
\sum_{z\in\mathcal Z_m} P_n(A)P_\theta(B\!\mid\!z)\delta_{z,z_m(A)}
=
P_n(A)P_\theta(B\!\mid\!z_m(A)),
\end{equation}
which is exactly the virtual-bond sufficiency relation.

Since the inner dimension of this factorization is $|\mathcal Z_m|$, we immediately obtain
\begin{equation}
\operatorname{rank}(Q_{AB})\le |\mathcal Z_m|.
\end{equation}
Applying Lemma~\ref{lem:MI-rank-appendix} gives
\begin{equation}
\mathcal I(n)=I_{P_n}(A\!:\!B)
\le \log_2\operatorname{rank}(Q_{AB})
\le \log_2|\mathcal Z_m|
= b\gamma_m.
\end{equation}
This proves Eq.~\eqref{eq:finite-precision-main}.
\end{proof}

\subsection{Lipschitz continuity assumption}

We next consider a continuous virtual bond.
In this case, the exact ordinary rank of the original joint-distribution matrix need not be controlled directly.
Instead, we construct a low-rank surrogate matrix by discretizing the virtual-bond space.

\begin{theorem}[VB Scaling Law, Lipschitz Assumption]
\label{thm:VB-lipschitz-rank-main}
Assume that any middle-cut virtual bond $z_m(A)\in\mathbb R^{\gamma_m}$, where $m=\lfloor\frac{n}{2}\rfloor$ and $A=\{s_1,\cdots,s_m\}$, satisfies:

\begin{enumerate}
    \item \textbf{Bounded bond space:} there exists a constant $\mathcal{R}$ such that
    \begin{equation}
    \|z_m(A)\|_2\le \mathcal{R}
    \qquad \text{for all }A.
    \end{equation}

    \item \textbf{Conditional-distribution Lipschitz continuity:}
    there exists a constant $L_{\mathrm{Lip}}>0$ such that
    \begin{equation}
    \|P_\theta(\cdot\!\mid\!z)-P_\theta(\cdot\!\mid\!z')\|_1
    \le
    L_{\mathrm{Lip}}\|z-z'\|_2
    \qquad \text{for all } z,z'\in \mathbb B_{\gamma_m}(\mathcal{R}),
    \label{eq:appendix-dist-lip}
    \end{equation}
    where $\mathbb B_{\gamma_m}(\mathcal{R}):=\{z\in\mathbb R^{\gamma_m}:\|z\|_2\le \mathcal{R}\}$ is the Euclidean ball of radius $\mathcal{R}$, and the norm is over the suffix variable $B$.
\end{enumerate}

Then for every $0<\varepsilon<1$, there exists a surrogate joint distribution $\widetilde P(A,B)$ and its corresponding matrix $\widetilde Q_{AB}$ such that
\begin{equation}
\sup_A \|P_n(\cdot\!\mid\!A)-\widetilde P(\cdot\!\mid\!A)\|_1\le \varepsilon,
\label{eq:appendix-rowwise-approx}
\end{equation}
and
\begin{equation}
\operatorname{rank}(\widetilde Q_{AB})
\le
\left(\frac{C\,\mathcal{R}\,L_{\mathrm{Lip}}}{\varepsilon}\right)^{\gamma_m},
\label{eq:appendix-rank-surrogate}
\end{equation}
for a geometric constant $C$.

Consequently,
\begin{equation}
\mathcal I(n)
\le
\gamma_m\log_2\!\left(\frac{C\,\mathcal{R}\,L_{\mathrm{Lip}}}{\varepsilon}\right)
+\varepsilon \log_2(|\mathcal B|-1)
+2h_2(\varepsilon/2),
\label{eq:appendix-lip-main-bound}
\end{equation}
where $|\mathcal B|$ is the support size of $B$ and
$h_2(x)=-x\log_2 x-(1-x)\log_2(1-x)$.
Hence
\begin{equation}
\gamma_m\ge
\frac{\mathcal I(n)-\Delta_\varepsilon(B)}
{\log_2\!\left(\frac{C\,\mathcal{R}\,L_{\mathrm{Lip}}}{\varepsilon}\right)},
\qquad
\Delta_\varepsilon(B):=\varepsilon \log_2(|\mathcal B|-1)+2h_2(\varepsilon/2).
\end{equation}
In particular, for fixed $\varepsilon$, $\mathcal{R}$, and $L_{\mathrm{Lip}}$ under a fixed approximation protocol, the denominator is $n$-independent and the correction term is controlled, so $\gamma_m$ and $\mathcal I(n)$ exhibit the same polynomial scaling exponent.
\end{theorem}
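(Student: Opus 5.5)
The plan is to replace the exact factorization $Q_{AB}=M_{\mathrm{left}}M_{\mathrm{right}}$ from the proof of Theorem~\ref{thm:VB-finite-precision-rank} with an approximate one obtained by quantizing the continuous bond. Then Lemma~\ref{lem:MI-rank-appendix} bounds the mutual information of the surrogate, and a continuity-of-entropy estimate transfers that bound back to $P_n$.

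\emph{Step 1 (covering the bond space).} Fix $\delta:=\varepsilon/L_{\mathrm{Lip}}$. Let $\mathcal N_\delta\subset\mathbb B_{\gamma_m}(\mathcal R)$ be a $\delta$-net of the ball in the Euclidean norm. The standard volumetric argument gives
\begin{equation}
|\mathcal N_\delta|\le\left(1+\frac{2\mathcal R}{\delta}\right)^{\gamma_m}\le\left(\frac{3\mathcal R}{\delta}\right)^{\gamma_m}
\end{equation}
whenever $\delta\le\mathcal R$. If $\delta>\mathcal R$, a single point suffices and the bound is trivial after enlarging $C$. Let $\pi:\mathbb B_{\gamma_m}(\mathcal R)\to\mathcal N_\delta$ send each point to a nearest net point, so that $\|\pi(z)-z\|_2\le\delta$.

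\emph{Step 2 (surrogate and rank).} By the bounded-bond assumption, $z_m(A)$ lies in the ball. Define
\begin{equation}
\widetilde P(A,B):=P_n(A)\,P_\theta\bigl(B\mid\pi(z_m(A))\bigr).
\end{equation}
This is a valid joint distribution with the same $A$-marginal as $P_n$. It factorizes as $\widetilde M_{\mathrm{left}}\widetilde M_{\mathrm{right}}$ with
\begin{equation}
\widetilde M_{\mathrm{left}}(A,w)=P_n(A)\,\delta_{w,\pi(z_m(A))},\qquad \widetilde M_{\mathrm{right}}(w,B)=P_\theta(B\mid w),\qquad w\in\mathcal N_\delta .
\end{equation}
Hence $\operatorname{rank}(\widetilde Q_{AB})\le|\mathcal N_\delta|\le(3\mathcal R L_{\mathrm{Lip}}/\varepsilon)^{\gamma_m}$, which is Eq.~\eqref{eq:appendix-rank-surrogate} with $C=3$. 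For the row-wise error, use the sufficiency relation Eq.~\eqref{eq:appendix-vb-sufficiency}, $P_n(\cdot\mid A)=P_\theta(\cdot\mid z_m(A))$, together with the Lipschitz condition Eq.~\eqref{eq:appendix-dist-lip}. Together they give an $\ell_1$ error of at most $L_{\mathrm{Lip}}\delta=\varepsilon$ for every $A$, which is Eq.~\eqref{eq:appendix-rowwise-approx}.

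\emph{Step 3 (transfer of mutual information).} Write $I(A:B)=H(B)-H(B\mid A)$ for both $P_n$ and $\widetilde P$. The conditional rows differ in total variation by at most $\varepsilon/2$. The classical sharp Fannes-type bound (Zhang/Audenaert) states $|H(p)-H(q)|\le t\log_2(|\mathcal B|-1)+h_2(t)$ when $\mathrm{TV}(p,q)=t$. Since $h_2$ is increasing on $[0,1/2]$ and $\varepsilon/2<1/2$, this bound also holds for all $t\le\varepsilon/2$.

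Averaging the bound over $A$ with the common weights $P_n(A)$ controls $|H_{P_n}(B\mid A)-H_{\widetilde P}(B\mid A)|$ by $\tfrac{\varepsilon}{2}\log_2(|\mathcal B|-1)+h_2(\varepsilon/2)$. The $B$-marginals are mixtures of the rows with the same weights, so by convexity of $\ell_1$ their total variation is also at most $\varepsilon/2$. The same bound therefore applies to $|H_{P_n}(B)-H_{\widetilde P}(B)|$. Adding the two contributions gives
\begin{equation}
\mathcal I(n)\le I_{\widetilde P}(A:B)+\Delta_\varepsilon(B).
\end{equation}
Applying Lemma~\ref{lem:MI-rank-appendix} to $\widetilde Q_{AB}$ and then the rank bound of Step 2 yields Eq.~\eqref{eq:appendix-lip-main-bound}. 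Rearranging gives the lower bound on $\gamma_m$.

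\emph{Main obstacle.} The delicate step is Step 3, because the entropy continuity bound is stated for a fixed total-variation distance and needs monotonicity to handle smaller distances. That is why $\varepsilon<1$ is used: it keeps $\varepsilon/2$ in the range $[0,1/2]$ where $h_2$ is increasing.

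The final scaling remark requires a separate observation. The term $\varepsilon\log_2(|\mathcal B|-1)$ grows like $\varepsilon n/2$. So ``controlled'' must mean that $\varepsilon$ is chosen small relative to $\mathcal I(n)/n$ under the fixed protocol. In that regime the correction is subleading and the exponents of $\gamma_m$ and $\mathcal I(n)$ match.
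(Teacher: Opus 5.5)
Your proposal is correct and follows the paper's proof essentially step for step: a net of the bond ball at scale $\varepsilon/L_{\mathrm{Lip}}$, a nearest-net-point surrogate factorized through the net index, Lemma~\ref{lem:MI-rank-appendix}, and then the Audenaert continuity bound applied to the $B$-marginal and to the conditionals averaged over $A$. Your explicit $C=3$, the monotonicity remark, and the observation that $\varepsilon\log_2(|\mathcal B|-1)$ grows linearly in $n$ are useful details the paper leaves implicit.

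One small correction: the case $\delta>\mathcal R$ cannot be absorbed into a geometric constant $C$. A one-point net has size $1>(C\mathcal R L_{\mathrm{Lip}}/\varepsilon)^{\gamma_m}$ whenever $\mathcal R L_{\mathrm{Lip}}\ll\varepsilon$, and in that regime the stated bounds genuinely fail. So, as the paper tacitly does, you should assume $\varepsilon\le\mathcal R L_{\mathrm{Lip}}$, or state the rank bound as $(1+2\mathcal R L_{\mathrm{Lip}}/\varepsilon)^{\gamma_m}$.
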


\begin{proof}
Set
\begin{equation}
\eta:=\varepsilon/L_{\mathrm{Lip}}.
\end{equation}
Since all virtual bonds lie in the Euclidean ball $\mathbb B_{\gamma_m}(\mathcal{R})$, let $\mathcal N_\eta$ be an $\eta$-net of this ball.
Standard covering-number bounds imply
\begin{equation}
|\mathcal N_\eta|
\le
\left(\frac{C\,\mathcal{R}}{\eta}\right)^{\gamma_m}
=
\left(\frac{C\,\mathcal{R}\,L_{\mathrm{Lip}}}{\varepsilon}\right)^{\gamma_m}.
\label{eq:appendix-covering}
\end{equation}

For each prefix $A$, let $\widehat z_m(A)\in\mathcal N_\eta$ be a nearest net point to $z_m(A)$, and define the surrogate distribution as
\begin{equation}
\widetilde P(B\!\mid\!A):=P_\theta(B\!\mid\!\widehat z_m(A)),
\qquad
\widetilde P(A,B):=P_n(A)\,\widetilde P(B\!\mid\!A).
\end{equation}
Since $\|z_m(A)-\widehat z_m(A)\|_2\le \eta$, the Lipschitz assumption gives
\begin{equation}
\|P_n(\cdot\!\mid\!A)-\widetilde P(\cdot\!\mid\!A)\|_1
=
\|P_\theta(\cdot\!\mid\!z_m(A))-P_\theta(\cdot\!\mid\!\widehat z_m(A))\|_1
\le
L_{\mathrm{Lip}}\eta
=
\varepsilon,
\end{equation}
which proves Eq.~\eqref{eq:appendix-rowwise-approx}.

Let $\widetilde Q_{AB}$ be the matrix representation of the surrogate joint distribution $\widetilde P(A,B)$, namely
\begin{equation}
(\widetilde Q_{AB})_{A,B}=\widetilde P(A,B).
\end{equation}
We now factorize $\widetilde Q_{AB}$ through the discretized virtual-bond index $\widehat z\in\mathcal N_\eta$. Define
\begin{align}
\widetilde M_{\mathrm{left}}(A,\widehat z)
&:= P_n(A)\,\delta_{\widehat z,\widehat z_m(A)},\\
\widetilde M_{\mathrm{right}}(\widehat z,B)
&:= P_\theta(B\!\mid\!\widehat z).
\end{align}
Then
\begin{equation}
\widetilde Q_{AB}=\widetilde M_{\mathrm{left}}\widetilde M_{\mathrm{right}},
\end{equation}
because for every $(A,B)$,
\begin{equation}
(\widetilde Q_{AB})_{A,B}
=
\sum_{\widehat z\in\mathcal N_\eta}
\widetilde M_{\mathrm{left}}(A,\widehat z)\widetilde M_{\mathrm{right}}(\widehat z,B)
=
\sum_{\widehat z\in\mathcal N_\eta}
P_n(A)P_\theta(B\!\mid\!\widehat z)\delta_{\widehat z,\widehat z_m(A)}
=
P_n(A)P_\theta(B\!\mid\!\widehat z_m(A)).
\end{equation}
By definition of $\widetilde P(A,B)$, this is exactly $(\widetilde Q_{AB})_{A,B}$.

Since the inner dimension of this factorization is $|\mathcal N_\eta|$, we immediately obtain
\begin{equation}
\operatorname{rank}(\widetilde Q_{AB})
\le
|\mathcal N_\eta|
\le
\left(\frac{C\,\mathcal{R}\,L_{\mathrm{Lip}}}{\varepsilon}\right)^{\gamma_m},
\end{equation}
which proves Eq.~\eqref{eq:appendix-rank-surrogate}.

Applying Lemma~\ref{lem:MI-rank-appendix} to the surrogate matrix gives
\begin{equation}
I_{\widetilde P}(A\!:\!B)
\le
\log_2\operatorname{rank}(\widetilde Q_{AB})
\le
\gamma_m\log_2\!\left(\frac{C\,\mathcal{R}\,L_{\mathrm{Lip}}}{\varepsilon}\right).
\label{eq:appendix-Itilde-rank}
\end{equation}

Finally, we relate the mutual information of the original distribution $P_n$ to that of the surrogate distribution $\widetilde P$. Since the $A$-marginal is unchanged, we have
\begin{equation}
I_{P_n}(A\!:\!B)=H_{P_n}(B)-H_{P_n}(B\!\mid\!A),
\qquad
I_{\widetilde P}(A\!:\!B)=H_{\widetilde P}(B)-H_{\widetilde P}(B\!\mid\!A).
\end{equation}
Hence
\begin{equation}
\bigl|I_{P_n}(A\!:\!B)-I_{\widetilde P}(A\!:\!B)\bigr|
\le
\bigl|H_{P_n}(B)-H_{\widetilde P}(B)\bigr|
+
\bigl|H_{P_n}(B\!\mid\!A)-H_{\widetilde P}(B\!\mid\!A)\bigr|.
\end{equation}
Moreover, using the rowwise approximation bound and the fact that $P_n(A)=\widetilde P(A)$,
\begin{equation}
\|P_n(B)-\widetilde P(B)\|_1
\le
\sum_A P_n(A)\,\|P_n(\cdot\!\mid\!A)-\widetilde P(\cdot\!\mid\!A)\|_1
\le \varepsilon.
\end{equation}
Applying the standard continuity bound for Shannon entropy on a finite alphabet~\cite{audenaert2007sharp} to both the marginal distribution of $B$ and, pointwise in $A$, to the conditional distributions on $\mathcal B$, we obtain
\begin{equation}
\bigl|H_{P_n}(B)-H_{\widetilde P}(B)\bigr|
\le
\frac{\varepsilon}{2}\log_2(|\mathcal B|-1)+h_2(\varepsilon/2),
\end{equation}
and likewise
\begin{equation}
\bigl|H_{P_n}(B\!\mid\!A)-H_{\widetilde P}(B\!\mid\!A)\bigr|
\le
\frac{\varepsilon}{2}\log_2(|\mathcal B|-1)+h_2(\varepsilon/2).
\end{equation}
Therefore,
\begin{equation}
\bigl|I_{P_n}(A\!:\!B)-I_{\widetilde P}(A\!:\!B)\bigr|
\le
\varepsilon \log_2(|\mathcal B|-1)+2h_2(\varepsilon/2).
\label{eq:appendix-MI-cont}
\end{equation}
Combining Eqs.~\eqref{eq:appendix-Itilde-rank} and~\eqref{eq:appendix-MI-cont}, we obtain
\begin{equation}
\mathcal I(n)=I_{P_n}(A\!:\!B)
\le
\gamma_m\log_2\!\left(\frac{C\,\mathcal{R}\,L_{\mathrm{Lip}}}{\varepsilon}\right)
+\varepsilon \log_2(|\mathcal B|-1)
+2h_2(\varepsilon/2),
\end{equation}
which is exactly Eq.~\eqref{eq:appendix-lip-main-bound}.
\end{proof}

\section{CMI for stabilizer states in the $Z$ basis}
\label{app:stab-cmi}

Let $\ket{\psi}$ be an $n$-qubit stabilizer state with stabilizer group $S\subset \mathcal P_n$, where $|S|=2^n$. Its density matrix has the standard expansion
\begin{equation}
\rho=\ket{\psi}\!\bra{\psi}=2^{-n}\sum_{g\in S} g.
\end{equation}
Fix the computational basis $\{\ket{z}\}_{z\in\mathbb F_2^n}$ and define the induced classical distribution
$P(z):=|\langle z|\psi\rangle|^2=\langle z|\rho|z\rangle$.
Since $\langle z|g|z\rangle=0$ for every Pauli operator $g$ containing at least one $X$ or $Y$ tensor factor, only the $Z$-type stabilizers contribute. Let
$S_Z:=\{\, g\in S:\ g \text{ has only } I,Z \text{ tensor factors}\,\}$
be the subgroup of pure-$Z$ stabilizers. Then
\begin{equation}
P(z)=2^{-n}\sum_{g\in S_Z}\langle z|g|z\rangle .
\label{eq:P-from-SZ}
\end{equation}

Every element of $S_Z$ can be written uniquely as $(-1)^{b(v)}Z(v)$, where
$Z(v):=\bigotimes_{i=1}^n Z_i^{v_i}$ for some $v\in\mathbb F_2^n$. Define
\begin{equation}
C:=\{\,v\in\mathbb F_2^n:\ \exists\,b(v)\in\{0,1\}\text{ such that }(-1)^{b(v)} Z(v)\in S_Z\,\}.
\end{equation}
For each $v\in C$, the sign is unique: if both $Z(v)$ and $-Z(v)$ belonged to $S_Z$, then their product would give $-I\in S$, which is impossible for a stabilizer group. Thus the sign function $b:C\to\mathbb F_2$ is well defined.

We next show that $C$ is a binary linear subspace and that $b$ is linear on $C$. Since $I\in S_Z$, we have $0\in C$. If $v,w\in C$, then $(-1)^{b(v)}Z(v)$ and $(-1)^{b(w)}Z(w)$ both lie in $S_Z$, so their product also lies in $S_Z$. Using $Z(v)Z(w)=Z(v+w)$, we obtain $(-1)^{b(v)+b(w)}Z(v+w)\in S_Z$. By uniqueness of the sign attached to a fixed $Z(v+w)$, it follows that $v+w\in C$ and
\begin{equation}
b(v+w)=b(v)+b(w)\pmod 2.
\label{eq:b-linear}
\end{equation}
Hence $C\le \mathbb F_2^n$ is a linear subspace; write $r:=\dim C$.

For each $v\in C$, we have $\bra z Z(v)\ket z = (-1)^{z\cdot v}$, so Eq.~\eqref{eq:P-from-SZ} becomes
\begin{equation}
P(z)=2^{-n}\sum_{v\in C} (-1)^{b(v)+z\cdot v}.
\label{eq:P-character-sum}
\end{equation}
Since $b$ is linear on $C$, there exists $u\in\mathbb F_2^n$ such that
\begin{equation}
b(v)=u\cdot v,\qquad \forall\, v\in C.
\label{eq:b-represented-by-u}
\end{equation}
Therefore
\begin{equation}
P(z)=2^{-n}\sum_{v\in C}(-1)^{(z+u)\cdot v}.
\label{eq:P-shifted-character-sum}
\end{equation}

We now evaluate this character sum. If $(z+u)\cdot v=0$ for all $v\in C$, then every term in Eq.~\eqref{eq:P-shifted-character-sum} equals $1$, and hence
$
P(z)=2^{-n}|C|=2^{-n}2^r=2^{-(n-r)}.
$

If instead $(z+u)\cdot v\neq 0$ for some $v\in C$, choose $v_0\in C$ such that $(z+u)\cdot v_0=1$. Since $C$ is a linear subspace, the translation $v\mapsto v+v_0$ is a bijection of $C$ onto itself. Therefore
$
\sum_{v\in C}(-1)^{(z+u)\cdot v}
=
\sum_{v\in C}(-1)^{(z+u)\cdot(v+v_0)}
=
-\sum_{v\in C}(-1)^{(z+u)\cdot v},
$
so the sum vanishes and $P(z)=0$. Thus
\begin{equation}
P(z)=2^{-(n-r)}
\quad\Longleftrightarrow\quad
(z+u)\cdot v=0\pmod 2,\ \forall\,v\in C.
\label{eq:P-support-condition}
\end{equation}

To find all valid $z$, we first choose a basis $v^{(1)},\dots,v^{(r)}$ of $C$ and let $M\in\mathbb F_2^{r\times n}$ be the matrix whose rows are these basis vectors. Define the corresponding sign vector $s\in\mathbb F_2^r$ by
$s_j:=b\!\left(v^{(j)}\right), j=1,\dots,r$.
Since $v^{(1)},\dots,v^{(r)}$ span $C$, Eq.~\eqref{eq:P-support-condition} is equivalent to $(z+u)\cdot v^{(j)}=0$ for all $j$. Using Eq.~\eqref{eq:b-represented-by-u}, this is equivalent to $v^{(j)}\cdot z=b\!\left(v^{(j)}\right)=s_j$ for all $j$, namely
\begin{equation}
Mz=s\pmod 2.
\label{eq:Mzs}
\end{equation}
Thus the support of $P$ is the affine subspace $\Omega=\{\,z\in\mathbb F_2^n:\ Mz=s\ (\mathrm{mod}\ 2)\,\}$ with size $|\Omega|=2^{n-r}$.
The Shannon entropy of $P$ can be computed as
\begin{equation}
H(P)=\log_2 |\Omega|=n-r.
\end{equation}

We then compute the subsystem entropies. Fix a bipartition $A|B$ and write $z=(a,b)$ with $a\in\mathbb F_2^{|A|}$ and $b\in\mathbb F_2^{|B|}$. Correspondingly, decompose $M=(M_A,M_B)$, so that
\begin{equation}
\Omega=\{(a,b): M_Aa+M_Bb=s\}.
\end{equation}
For fixed $a$, the compatible $b$ satisfy
\begin{equation}
M_B b=s-M_A a \qquad (\mathrm{mod}\ 2).
\label{eq:MB-fixed-a}
\end{equation}
This equation is solvable if and only if $s-M_A a\in\mathrm{Im}(M_B)$. When it is solvable, the solution set of $b$ is an affine translate of $\ker(M_B)$ and therefore contains $2^{|B|-\rank(M_B)}$ elements. Hence the marginal $P_A$ is uniform on its support, with
$$
P_A(a)=2^{-(n-r)}\,2^{|B|-\rank(M_B)}
      =2^{-(|A|-r+\rank(M_B))}.
$$
Therefore
\begin{equation}
H(A)=|A|-r+\rank(M_B).
\end{equation}
By the same argument,
\begin{equation}
H(B)=|B|-r+\rank(M_A).
\end{equation}
Using $H(P)=n-r$, we obtain
\begin{equation}
\begin{aligned}
I_P(A:B)
&=H(A)+H(B)-H(P)\\
&=\rank(M_A)+\rank(M_B)-\rank(M).
\end{aligned}
\label{eq:stab-cmi-rank}
\end{equation}
This depends only on the row space of $M$ and is independent of the sign $s$.

\section{Checkerboard stabilizer family with tunable middle-cut amplitude complexity}
\label{app:checkerboard-construction}

In this appendix we formalize the checkerboard stabilizer family used in the tomography benchmark and explain why its middle-cut amplitude complexity scales as $L^\gamma$.
The construction is designed so that the number of $Z$-type parity checks is tunable, while the number of parity checks crossing any macroscopic row or column cut is controlled at the scale $L^\gamma$.
By the stabilizer rank formula of \cref{app:stab-cmi}, this immediately implies
\(
\mathcal I(L)\propto L^\gamma
\)
for the resulting stabilizer-state family in the computational basis.

\subsection{Lattice and five-body parity checks}

Consider an $L\times L$ square lattice with open boundary conditions.
We place one qubit on each plaquette center, so the physical Hilbert space contains
\(
n=L^2
\)
qubits.
It is convenient to label qubits by lattice coordinates
\(
(i,j)\in \mathbb Z_L\times \mathbb Z_L
\),
where $\mathbb Z_L=\{0,1,\dots,L-1\}$ with addition modulo $L$.

For each lattice site $r=(i,j)$, define the associated five-site ``cross'' support
\begin{equation}
\mathcal C_r
=
\bigl\{
(i,j),\,
(i+1,j),\,
(i-1,j),\,
(i,j+1),\,
(i,j-1)
\bigr\}.
\end{equation}
The corresponding $Z$-type parity check is
\begin{equation}
B_r
=
\prod_{u\in \mathcal C_r} Z_u .
\label{eq:cross-check}
\end{equation}
Equivalently, in the computational basis, the constraint $B_r=+1$ requires that the parity of the five bits on $\mathcal C_r$ be even.

We will select a subset $\Lambda_\gamma\subset \mathbb Z_L^2$ of such cross centers and impose the parity checks
\begin{equation}
B_r\ket{\psi}=+\ket{\psi},
\qquad r\in \Lambda_\gamma.
\end{equation}
These parity checks are chosen not to overlapp with each other. And the size of $\Lambda_\gamma$ is chosen to scale as
\begin{equation}
|\Lambda_\gamma|\propto L^{2\gamma},
\qquad 0\le \gamma\le 1.
\end{equation}

\subsection{Uniformly distributed check centers}

The key requirement is that the selected check centers be distributed approximately uniformly over the checkerboard.
A convenient way to formalize this is as follows.

Let
\begin{equation}
M_\gamma \propto L^\gamma
\end{equation}
and partition the $L\times L$ torus into
\(
M_\gamma\times M_\gamma
\)
mesoscopic blocks, each of linear size
\(
\ell_\gamma\propto L^{1-\gamma}
\).
Choose one check center from each block.
This produces a set
\begin{equation}
\Lambda_\gamma=\{r_{\alpha,\beta}\}_{\alpha,\beta=1}^{M_\gamma}
\end{equation}
with cardinality
\begin{equation}
|\Lambda_\gamma|=M_\gamma^2\propto L^{2\gamma}.
\end{equation}

Because the spacing between neighboring selected centers is of order $\ell_\gamma\asymp L^{1-\gamma}$ in both directions, any horizontal or vertical line of macroscopic length $L$ intersects order-$L^\gamma$ of the corresponding cross supports.
More precisely, for any vertical cut $x=x_0$ or horizontal cut $y=y_0$, the number of selected checks whose support intersects both sides of the cut obeys
\begin{equation}
N_{\rm cross}(L)\propto L^\gamma.
\label{eq:Ncross-scaling}
\end{equation}
This is the geometric input behind the tunable middle-cut complexity.

\subsection{The associated CSS stabilizer state}

Let $\mathcal S_Z(\Lambda_\gamma)$ denote the Abelian group generated by the selected $Z$-type checks:
\begin{equation}
\mathcal S_Z(\Lambda_\gamma)
=
\bigl\langle B_r\,:\,r\in \Lambda_\gamma \bigr\rangle .
\end{equation}
In the computational basis, the common $+1$ eigenspace of these generators consists exactly of the bit strings satisfying all selected parity constraints.

To obtain a stabilizer state rather than a degenerate code space, we supplement the $Z$-checks by a commuting set of independent $X$-type stabilizers.
Let $\mathcal S_X$ be any maximal independent commuting set of $X$-type operators satisfying
\begin{equation}
[\mathcal S_X,\mathcal S_Z(\Lambda_\gamma)]=0,
\end{equation}
such that
\begin{equation}
\mathcal S
=
\bigl\langle \mathcal S_Z(\Lambda_\gamma),\mathcal S_X \bigr\rangle
\end{equation}
forms a full-rank stabilizer group with a unique common $+1$ eigenstate, denoted
\(
\ket{\psi_{\Lambda_\gamma}}.
\)

By standard CSS stabilizer structure, the computational-basis amplitudes of this state are uniform over the set of bit strings obeying all $Z$-type parity constraints:
\begin{equation}
\ket{\psi_{\Lambda_\gamma}}
=
\frac{1}{\sqrt{|\Omega_{\Lambda_\gamma}|}}
\sum_{\bm s\in \Omega_{\Lambda_\gamma}}
\ket{\bm s},
\label{eq:checkerboard-uniform}
\end{equation}
where
\begin{equation}
\Omega_{\Lambda_\gamma}
=
\bigl\{
\bm s\in\{0,1\}^{L^2}
:\,
B_r(\bm s)=+1,\ \forall r\in\Lambda_\gamma
\bigr\}.
\end{equation}
Hence the induced amplitude distribution in the computational basis is
\begin{equation}
P_{\Lambda_\gamma}(\bm s)
=
\frac{1}{|\Omega_{\Lambda_\gamma}|}\,
\mathbf 1_{\bm s\in\Omega_{\Lambda_\gamma}}.
\end{equation}
Therefore the middle-cut amplitude complexity is determined entirely by the binary parity-check structure of the selected $Z$-checks.

\subsection{Scaling of the middle-cut CMI}

We now explain why
\(
\mathcal I(L)\propto L^\gamma
\)
for this family.

Fix a vertical or horizontal bipartition of the lattice into two halves, denoted $A$ and $B$.
Let
\begin{equation}
M=(M_A,M_B)
\end{equation}
be the binary parity-check matrix associated with the selected $Z$-checks, split into columns belonging to $A$ and $B$.
By the stabilizer rank formula proved in \cref{app:stab-cmi},
\begin{equation}
\mathcal I(L)
=
\rank(M_A)+\rank(M_B)-\rank(M),
\label{eq:checkerboard-rank-formula}
\end{equation}
with all ranks taken over $\mathbb F_2$.

For the present construction, the only checks contributing to the combination
\(
\rank(M_A)+\rank(M_B)-\rank(M)
\)
are those whose support intersects both sides of the cut.
Checks entirely contained in $A$ or entirely contained in $B$ do not generate cross-cut mutual information.
Since the selected check centers are uniformly distributed, the number of crossing checks satisfies \cref{eq:Ncross-scaling}, namely
\begin{equation}
N_{\rm cross}(L)\propto L^\gamma.
\end{equation}

Moreover, for a generic uniform placement such as the block construction above, these crossing checks are linearly independent up to at most $O(1)$ global relations.
Consequently,
\begin{equation}
\rank(M_A)+\rank(M_B)-\rank(M)= N_{\rm cross}(L)\propto L^\gamma.
\end{equation}
Substituting this into \cref{eq:checkerboard-rank-formula} gives
\begin{equation}
\mathcal I(L)\propto L^\gamma.
\label{eq:checkerboard-final-scaling}
\end{equation}

This is precisely the scaling used in the main text.

\section{Explicit CMI formula for the toric code}
\label{app:toric-cmi}

In this section we specialize Eq.~(\ref{eq:stab-cmi-rank}) to the $Z$-basis many-body probability distribution of the $\mathbb{Z}_2$ toric-code stabilizer state on an $L\times L$ torus. Since only $Z$-type stabilizers contribute to the diagonal probability distribution, the relevant constraint matrix $M_L$ is the plaquette parity-check matrix over $\mathbb{F}_2$. For a fixed bipartition $A|B$, let $M_{A,L}$ and $M_{B,L}$ denote its column restrictions to the two subsystems. Then
\begin{equation}
I_{P,L}(A:B)
=
\operatorname{rank}(M_{A,L})
+
\operatorname{rank}(M_{B,L})
-
\operatorname{rank}(M_L),
\end{equation}
where all ranks are taken over $\mathbb{F}_2$.

\paragraph{Edge-label order.}
We consider an $L\times L$ square lattice on a torus, with vertex set
$\{1,2,\dots,L\}\times\{1,2,\dots,L\}$ and periodic boundary conditions. A qubit is placed on each edge. We first label the vertices by
\begin{equation}
v(i,j):=L(j-1)+i,
\qquad i,j\in\{1,\dots,L\},
\end{equation}
that is, for each fixed $j$ we scan $i=1,\dots,L$, and then increase $j$.

For each vertex $(i,j)$, we associate two edges: the horizontal edge connecting $(i,j)$ and $(i,j+1)$, and the vertical edge connecting $(i,j)$ and $(i+1,j)$, with all coordinates understood modulo $L$. Their labels are assigned by
\begin{equation}
(i,j)\leftrightarrow(i,j+1)\mapsto 2[L(j-1)+i]-1,
\qquad
(i,j)\leftrightarrow(i+1,j)\mapsto 2[L(j-1)+i].
\end{equation}
Equivalently, we scan the vertices in the order of $v(i,j)$ and, at each vertex, label first the horizontal edge and then the vertical edge. This fixes the qubit ordering, and hence the column order of $M_L$, $M_{A,L}$, and $M_{B,L}$. The resulting labeling pattern, together with the induced bipartition into the first $L^2$ qubits ($A$) and the remaining $L^2$ qubits ($B$), is illustrated in Fig.~\ref{fig:edge-labeling-odd-even} for both odd and even $L$.

\begin{figure}[h]
    \centering
    \begin{tikzpicture}[
        x=0.68pt,y=0.68pt,yscale=-1,xscale=1,
        every node/.style={inner sep=0.7pt},
        gridline/.style={draw=black!75, line width=1.0pt, line cap=round, line join=round},
        outerbox/.style={draw=black, line width=1.25pt, line cap=round, line join=round}
    ]

        \begin{scope}[xshift=0cm,yshift=0cm]
            \def\L{3}
            \def\a{60}

            \draw[outerbox] (0,0) rectangle (\L*\a,\L*\a);

            \foreach \k in {1,...,2}{
                \draw[gridline] (\k*\a,0) -- (\k*\a,\L*\a);
                \draw[gridline] (0,\k*\a) -- (\L*\a,\k*\a);
            }

            \foreach \j in {1,...,\L}{
                \foreach \i in {1,...,\L}{
                    \pgfmathtruncatemacro{\v}{\L*(\j-1)+\i}
                    \pgfmathtruncatemacro{\oddlabel}{2*\v-1}
                    \pgfmathtruncatemacro{\evenlabel}{2*\v}

                    \ifnum\oddlabel>\numexpr\L*\L\relax
                        \def\oddcolor{blue}
                    \else
                        \def\oddcolor{red}
                    \fi

                    \ifnum\evenlabel>\numexpr\L*\L\relax
                        \def\evencolor{blue}
                    \else
                        \def\evencolor{red}
                    \fi

                    \node[anchor=north west,text=\oddcolor]
                        at ({(\i-1)*\a + 1.8},{\L*\a-\j*\a+17.2}) {$\textstyle \oddlabel$};

                    \node[anchor=north west,text=\evencolor]
                        at ({(\i-1)*\a + 17.0},{\L*\a-(\j-1)*\a - 11.0}) {$\textstyle \evenlabel$};
                }
            }

            \node at (0.5*\L*\a,-17) {\small $L=3$};
        \end{scope}

        \begin{scope}[xshift=135pt,yshift=-0cm]
            \def\L{4}
            \def\a{45}

            \draw[outerbox] (0,0) rectangle (\L*\a,\L*\a);

            \foreach \k in {1,...,3}{
                \draw[gridline] (\k*\a,0) -- (\k*\a,\L*\a);
                \draw[gridline] (0,\k*\a) -- (\L*\a,\k*\a);
            }

            \foreach \j in {1,...,\L}{
                \foreach \i in {1,...,\L}{
                    \pgfmathtruncatemacro{\v}{\L*(\j-1)+\i}
                    \pgfmathtruncatemacro{\oddlabel}{2*\v-1}
                    \pgfmathtruncatemacro{\evenlabel}{2*\v}

                    \ifnum\oddlabel>\numexpr\L*\L\relax
                        \def\oddcolor{blue}
                    \else
                        \def\oddcolor{red}
                    \fi

                    \ifnum\evenlabel>\numexpr\L*\L\relax
                        \def\evencolor{blue}
                    \else
                        \def\evencolor{red}
                    \fi

                    \node[anchor=north west,text=\oddcolor]
                        at ({(\i-1)*\a + 1.7},{\L*\a-\j*\a+13.0}) {$\textstyle \oddlabel$};

                    \node[anchor=north west,text=\evencolor]
                        at ({(\i-1)*\a + 12.8},{\L*\a-(\j-1)*\a - 11.0}) {$\textstyle \evenlabel$};
                }
            }

            \node at (0.5*\L*\a,-17) {\small $L=4$};
        \end{scope}

    \end{tikzpicture}
    \caption{Examples of the edge-label convention for odd and even system sizes. The left panel shows the $3\times 3$ case ($L$ odd), and the right panel shows the $4\times 4$ case ($L$ even). For each vertex $(i,j)$, the horizontal edge is assigned the label $2[L(j-1)+i]-1$, while the vertical edge is assigned the label $2[L(j-1)+i]$. The full set of $2L^2$ edge qubits is then partitioned according to this ordering: the first $L^2$ labels form subsystem $A$ (red), and the remaining $L^2$ labels form subsystem $B$ (blue).}
    \label{fig:edge-labeling-odd-even}
\end{figure}

For the torus, the plaquette constraints satisfy one global dependence, equivalently $\prod_p B_p=\mathbb{I}$. Therefore,
\begin{equation}
\operatorname{rank}(M_L)=L^2-1.
\end{equation}
However, under the above edge ordering, the bipartition into the first $L^2$ qubits ($A$) and the remaining $L^2$ qubits ($B$) produces different geometric patterns for odd and even $L$. As a result, the restricted matrices $M_{A,L}$ and $M_{B,L}$ have different structures in the two parity sectors, and their ranks must be analyzed separately.

\paragraph{Even system size: $L=2k$ with $k\in\mathbb{Z}^+$.}
In this case,
\begin{equation}
\operatorname{rank}(M_{A,L})
=
\operatorname{rank}(M_{B,L})
=
L\Bigl(\frac{L}{2}+1\Bigr)-1.
\end{equation}
Substituting into the rank formula yields
\begin{equation}
I_{P,L}(A:B)=2L-1,
\qquad L \text{ even}.
\end{equation}

\paragraph{Odd system size: $L=2k+1$ with $k\in\mathbb{Z}_{\ge 0}$.}
In this case,
\begin{equation}
\operatorname{rank}(M_{A,L})=\frac{(L+1)^2}{2},
\qquad
\operatorname{rank}(M_{B,L})=\frac{(L+1)^2}{2}-2.
\end{equation}
Together with $\operatorname{rank}(M_L)=L^2-1$, this gives
\begin{equation}
I_{P,L}(A:B)=2L,
\qquad L \text{ odd}.
\end{equation}

\paragraph{Final formula.}
Combining the two parity sectors, we obtain
\begin{equation}
I_{P,L}(A:B)=
\begin{cases}
2L-1, & L \text{ even},\\
2L, & L \text{ odd}.
\end{cases}
\end{equation}

\section{Exact amplitude distribution and conditional sampling for the TFD free-fermion model}
\label{app:TFD-fermion-amplitude}

In this appendix we explain how the amplitude distribution of the TFD state is computed for the free-fermion model used in the main text, and how the same structure yields an exact conditional-sampling algorithm.

\subsection{TFD state as a doubled fermionic Gaussian state}

We consider the spinless $p$-wave BCS chain
\begin{equation}
H
=\sum_{j=1}^{n} \left[-J\left(c_j^\dagger c_{j+1}+c_j^\dagger c_{j+1}^\dagger\right)
+2h\left(c_j^\dagger c_j-\tfrac12\right)\right],
\label{eq:app-bcs-H}
\end{equation}
written in the occupation basis
\(
\mathcal B=\{\ket{n_1\cdots n_n}\}.
\)
Since $H$ is quadratic in the fermionic operators, it can be diagonalized by a Bogoliubov transformation,
\begin{equation}
H=E_0+\sum_{\mu=1}^{n}\varepsilon_\mu\,\eta_\mu^\dagger \eta_\mu,
\qquad \varepsilon_\mu\ge 0.
\label{eq:app-bdg-diag}
\end{equation}

The TFD state at inverse temperature $\beta$ is
\begin{equation}
\ket{\mathrm{TFD}_\beta}
=
\frac{1}{\sqrt{Z(\beta)}}
\sum_\mu e^{-\beta E_\mu/2}\ket{\mu}_A\otimes \ket{\mu}_B,
\qquad
Z(\beta)=\Tr e^{-\beta H}.
\label{eq:app-tfd-def}
\end{equation}
In the quasiparticle basis of \cref{eq:app-bdg-diag}, the TFD factorizes into independent mode pairs,
\begin{equation}
\ket{\mathrm{TFD}_\beta}
=
\prod_{\mu=1}^{n}
\frac{1}{\sqrt{1+e^{-\beta \varepsilon_\mu}}}
\left(
1+e^{-\beta \varepsilon_\mu/2}\,
\eta_{\mu,A}^\dagger \eta_{\mu,B}^\dagger
\right)
\ket{\Omega_\eta}_A\ket{\Omega_\eta}_B .
\label{eq:app-tfd-product}
\end{equation}
Thus $\ket{\mathrm{TFD}_\beta}$ is a pure fermionic Gaussian state on the doubled system.

Define
\begin{equation}
u_\mu=\frac{1}{\sqrt{1+e^{-\beta \varepsilon_\mu}}},
\qquad
v_\mu=\frac{e^{-\beta \varepsilon_\mu/2}}{\sqrt{1+e^{-\beta \varepsilon_\mu}}},
\qquad
f_\mu=v_\mu^2=\frac{1}{e^{\beta\varepsilon_\mu}+1}.
\label{eq:app-u-v-f}
\end{equation}
Then in the quasiparticle basis the nonzero two-point functions are
\begin{align}
\langle \eta_{\mu,A}^\dagger \eta_{\nu,A}\rangle
&=
\langle \eta_{\mu,B}^\dagger \eta_{\nu,B}\rangle
=
f_\mu\,\delta_{\mu\nu},
\label{eq:app-qp-occ}
\\
\langle \eta_{\mu,A}\eta_{\nu,B}\rangle
&=
-\,u_\mu v_\mu\,\delta_{\mu\nu},
\qquad
\langle \eta_{\mu,A}^\dagger \eta_{\nu,B}^\dagger\rangle
=
-\,u_\mu v_\mu\,\delta_{\mu\nu}.
\label{eq:app-qp-pair}
\end{align}
Transforming these correlators back to the physical $c$ basis gives the covariance matrix of the doubled TFD state exactly.

\subsection{Amplitude distribution in the doubled computational basis}

Let
\begin{equation}
d=(c_{1,A},\dots,c_{n,A},c_{1,B},\dots,c_{n,B})
\end{equation}
collect the physical fermions on the doubled system.
Since the TFD is an even pure Gaussian state, it can be written in generalized BCS form
\begin{equation}
\ket{\mathrm{TFD}_\beta}
=
\mathcal N_\beta\,
\exp\!\left(
\frac12\sum_{u,v=1}^{2n}
\mathcal F_{\beta,uv}\,
d_u^\dagger d_v^\dagger
\right)
\ket{0},
\label{eq:app-tfd-bcs-form}
\end{equation}
where $\mathcal F_\beta$ is a $2n\times 2n$ antisymmetric pairing matrix determined by the doubled covariance matrix.

Let
\begin{equation}
x=(a,b)\in\{0,1\}^{2n}
\end{equation}
be a doubled occupation string, where $a=(a_1,\dots,a_n)$ and $b=(b_1,\dots,b_n)$ denote the occupations on the two copies.
Write
\begin{equation}
S(x)=\{u\in\{1,\dots,2n\}:x_u=1\}
\end{equation}
for the occupied index set of $x$.
Then the occupation-basis amplitude is
\begin{equation}
\Psi_\beta(x)
\equiv
\braket{x}{\mathrm{TFD}_\beta}
=
\begin{cases}
\mathcal N_\beta\,
\operatorname{Pf}\!\left[(\mathcal F_\beta)_{S(x)}\right], & |S(x)|\ \text{even},\\[4pt]
0, & |S(x)|\ \text{odd},
\end{cases}
\label{eq:app-pf-amplitude}
\end{equation}
and the corresponding amplitude distribution is
\begin{equation}
\mathbb P_\beta(x)=|\Psi_\beta(x)|^2.
\label{eq:app-pf-prob}
\end{equation}

This is equivalent to the imaginary-time propagator representation
\begin{equation}
\mathbb P_\beta(a,b)
=
\frac{\left|\bra{a}e^{-\beta H/2}\ket{b}\right|^2}{Z(\beta)},
\label{eq:app-propagator-prob}
\end{equation}
which is the form used in the main text.
In practice, one diagonalizes the quadratic Hamiltonian in \cref{eq:app-bcs-H}, constructs the doubled covariance matrix from \cref{eq:app-qp-occ} and \cref{eq:app-qp-pair}, reconstructs $\mathcal F_\beta$, and then evaluates \cref{eq:app-pf-amplitude} for any desired doubled bitstring.

\subsection{Exact conditional sampling and arbitrary cut CMI}

The same Gaussian structure also yields an exact sequential sampler for the distribution
\(
\mathbb P_\beta(x)
\)
in any chosen autoregressive ordering of the doubled bitstring.

Let
\begin{equation}
x_1,x_2,\dots,x_{2n}
\end{equation}
denote the doubled bits in the chosen autoregressive order.
Exact conditional sampling means evaluating
\begin{equation}
p(x_t|x_{<t})
=
\frac{p(x_{\le t})}{p(x_{<t})}
\end{equation}
exactly at each step.

For fermionic Gaussian states, projective occupation-number measurements admit closed-form update rules at the covariance-matrix level.
After the first $t-1$ bits have been fixed, one obtains a conditioned Gaussian state on the remaining modes.
The next conditional probability is then
\begin{equation}
p(x_t=1\,|\,x_{<t})
=
\langle d_t^\dagger d_t\rangle_{\rho^{(t)}},
\qquad
p(x_t=0\,|\,x_{<t})
=
1-\langle d_t^\dagger d_t\rangle_{\rho^{(t)}},
\label{eq:app-cond-prob}
\end{equation}
where $\rho^{(t)}$ denotes the exact state conditioned on the prefix $x_{<t}$.

After sampling $x_t\in\{0,1\}$, one updates the conditioned state by
\begin{equation}
\rho^{(t+1)}
=
\frac{\Pi_{x_t}\,\rho^{(t)}\,\Pi_{x_t}}{\Tr(\Pi_{x_t}\rho^{(t)})},
\qquad
\Pi_{1}=d_t^\dagger d_t,\quad
\Pi_{0}=1-d_t^\dagger d_t.
\label{eq:app-projection-update}
\end{equation}
Iterating this procedure yields an exact sample from the full joint distribution
\(
\mathbb P_\beta(x).
\)

Concretely, the algorithm is:
\begin{enumerate}
\item Construct the doubled TFD covariance matrix in the physical basis.
\item Choose an autoregressive ordering of the $2n$ physical modes.
\item For $t=1,2,\dots,2n$:
\begin{enumerate}
\item compute the conditional probability from \cref{eq:app-cond-prob},
\item sample $x_t\in\{0,1\}$,
\item update the conditioned covariance matrix using \cref{eq:app-projection-update}.
\end{enumerate}
\item Output the sampled doubled bitstring $x=(a,b)$.
\end{enumerate}

This exact conditional sampler also provides a direct route to evaluating the classical mutual information across an arbitrary cut.
For any bipartition of the ordered bitstring into subsystems $A$ and $B$, one may estimate
\begin{equation}
I(A\!:\!B)=H(A)+H(B)-H(AB)
\end{equation}
by computing the corresponding Shannon entropies from samples.
Because the sampling procedure is conditional, it naturally induces the marginal distributions on $A$ and $B$: one obtains samples from $P(A)$ by stopping after the $A$ variables, and samples from $P(B)$ or from conditional distributions $P(B|A)$ by continuing the same sequential procedure.
Thus the exact sampler makes the middle-cut and more general arbitrary cut CMI numerically accessible.

\section{Thermofield Double State of Transverse-Field Ising Model at Finite Temperature}\label{app:TFIM}

Consider the 1D transverse-field Ising model (TFIM) with anti-periodic boundary conditions,
\begin{align}
H = -J\sum_{i=1}^{n} X_i X_{i+1} +J\,X_n X_{1} - h\sum_{i=1}^{n} Z_i,
\end{align}
with energy eigenstates $\ket{\mu}$ chosen real since the Hamiltonian is real and symmetric for $J,h\in\mathbb{R}$. We claim that the thermofield-double (TFD) state at inverse temperature $\beta$,
\begin{align}
\ket{\mathrm{TFD}_\beta}
= \frac{1}{\sqrt{Z(\beta)}}\sum_{\mu} e^{-\beta E_\mu/2}\,\ket{\mu}_A\otimes\ket{\mu}_B,
\end{align}
has a dephased mutual information in the $Z$-basis that scales linearly with the volume of the system $n$.
The subscripts $A$ and $B$ refer to the two copies of the system, each of which contains $n$ sites.
Here $Z(\beta)$ is the thermal partition function.The doubled theory $\mathrm{TFIM}^{\otimes 2}$ has Hilbert space $\mathcal{H}=\mathcal{H}_A\otimes \mathcal{H}_B$.
Dephasing in the computational basis gives
\begin{align}
\rho_{\mathrm{deph}}
= \sum_{a,b}
\bra{a,b}\rho\ket{a,b}\;
\ket{a,b}\bra{a,b},
\end{align}
where $a,b\in\{0,1\}^{n}$.
Using
\begin{align}
\braket{a,b}{\mathrm{TFD}_\beta}
&= \frac{1}{\sqrt{Z(\beta)}}\sum_\mu e^{-\beta E_\mu/2}\,\braket{a}{\mu}\braket{\mu}{b}= \frac{1}{\sqrt{Z(\beta)}}\bra{a}e^{-\beta H/2}\ket{b},
\end{align}
where we used the reality of the eigenfunctions $\braket{\mu}{b}=\braket{b}{\mu}$, define
\begin{align}
K_\b(a,b):=\bra{a}e^{-\beta H/2}\ket{b}.
\end{align}
Then the dephased joint distribution is
\begin{align}
\mathbb{P}(a,b)=\frac{K_\b(a,b)^2}{Z(\beta)}.
\end{align}
By resolving the identity, we may write down the marginal probabilities:
\begin{align}
    \mathbb{P}(a) &= \frac{1}{Z(\b)}K_{2\b}(a, a) &\mathbb{P}(b) = \frac{1}{Z(\b)}K_{2\b}(b, b)
\end{align}
By computing the kernel $K_\b$, we may evaluate the Shannon entropies and dephased mutual information.

\subsection{Path integral calculation of $K_\b(a,b)$}

Performing the Jordan--Wigner transformation yields a quadratic Hamiltonian. Since
$c_j^\dagger c_j=\tfrac12(1-Z_j)$, the $Z$-basis is in one-to-one correspondence with the occupation-number basis.
After the Jordan--Wigner transform, we obtain a particle-number nonconserving BCS-like quadratic fermionic Hamiltonian
\begin{align}
H_{\mathcal{P}}
=-J\left(\sum_{j=1}^{n-1} c_j^\dagger c_{j+1} + c_j^\dagger c_{j+1}^\dagger + \mathrm{h.c.}\right)-J\mathcal{P}\left(c_n^\dagger c_1 + c_n^\dagger c_1^\dagger + \mathrm{h.c.}\right)+2h\sum_{j=1}^n\left(c_j^\dagger c_j-\tfrac12\right),
\end{align}
where $\mathcal{P}=(-1)^{\sum_j c_j^\dagger c_j}$ is the fermion parity operator, which commutes with the Hamiltonian:
$[H,\mathcal{P}]=0$ and $\mathcal{P}=\pm 1$.
Thus, the full Hamiltonian and Hilbert space split as
$H = H_{+}\oplus H_{-}$ and $\mathcal{H}=\mathcal{H}_{+}\oplus\mathcal{H}_{-}$,
where $\mathcal{H}_\pm$ are the subspaces with even/odd Hamming weight, respectively.
This implies that the functional integrals for $K_\b$ in these two sectors can be computed independently.

Introduce the coherent states:
\begin{align}
\ket{\eta} &= \exp\!\left(\sum_j \eta_j c_j^\dagger\right)\ket{0},
\qquad
\bra{\bar{\eta}} = \bra{0}\exp\!\left(-\sum_j \bar{\eta}_j c_j\right).
\end{align}
Then the kernel can be formulated as the functional integral
\begin{align}
K_\b(\bar{\eta}',\eta)
=
\int_{\substack{c(0)=\eta\\ \bar{c}(\beta)=\bar{\eta}'}}
[dc\,d\bar{c}]\,
\exp\!\left(
-\int_0^\beta d\tau\,
\left[\bar{c}\,\partial_\tau c + H_{\mathcal{P}}(\bar{c},c)\right]
\right),
\end{align}
where the Hamiltonian is normal ordered.
Since $H_{\mathcal{P}}$ is quadratic, it may be expressed as
\begin{align}
H_{\mathcal{P}}
=
\begin{pmatrix}
c\\ c^\dagger
\end{pmatrix}^\dagger
\begin{pmatrix}
A & B\\
-\,B^* & -A^{\top}
\end{pmatrix}
\begin{pmatrix}
c\\ c^\dagger
\end{pmatrix},
\end{align}
where $A$ and $B$ are $n\times n$ matrices and $c=(c_1,\dots,c_n)^\top$.
With the conventions above,
\begin{align}
A = 2h\,\mathbb{I} - J\big(S_{\mathcal{P}}+S_{\mathcal{P}}^{\top}\big),
\qquad
B = -J\,S_{\mathcal{P}} + J\,S_{\mathcal{P}}^{\top},
\end{align}
where $B^\top=-B$ and
\begin{align}
(S_{\mathcal{P}})_{jk} \equiv \delta_{k,j+1} + \mathcal{P}\,\delta_{j,n}\delta_{k,1},
\qquad
j,k\in\{1,\dots,n\},
\end{align}
with $\delta_{k,j+1}$ taken without wrap-around. Define
\begin{align}
T_{\mathcal{P}}
&:= \exp\!\left(-\frac{\beta}{2}\, \mathcal{H}_{\mathcal{P}}\right)
=
\begin{pmatrix}
T_{11} & T_{12}\\
T_{21} & T_{22}
\end{pmatrix}, & \mathcal{H}_\mathcal{P} = \begin{pmatrix}
    A&B\\-B^*&-A^\top
\end{pmatrix}
\end{align}
and set
\begin{align}
X=T_{12}T_{22}^{-1},
\qquad
Z=T_{22}^{-1}T_{21},
\qquad
e^{-Y}=T_{22}^{\top}.
\end{align}
For any subset $I\subset\{1,2,\dots,n\}$, consider the fermionic occupation basis
\begin{align}
\ket{I} = c_{i_1}^\dagger \cdots c_{i_p}^\dagger \ket{0},
\qquad i_1<\cdots<i_p.
\end{align}
Then the kernel in the parity sector $\mathcal{P}$ is
\begin{align}
K_\b^{\mathcal{P}}(a,b)
\equiv
\bra{a}e^{-\frac{\beta}{2}H_{\mathcal{P}}}\ket{b}
=
\bra{J}\,e^{-\frac{\beta}{2}H_{\mathcal{P}}}\,\ket{I}.
\end{align}
The full kernel obeys the parity selection rule
\begin{align}
K_\b(a,b)=0 \quad \text{unless}\quad |a|\equiv |b|\pmod 2.
\end{align}
Define the $2n\times 2n$ antisymmetric matrix
\begin{align}\label{eq:A-big}
\mathcal{A}
=
\begin{pmatrix}
X & e^{Y}\\
-\,e^{Y^{\top}} & Z
\end{pmatrix}.
\end{align}
Introduce the index set
\begin{align}\label{eq:S-set}
S \equiv J \cup (n+I),
\end{align}
i.e.\ $S$ consists of indices in $J\subset\{1,\dots,n\}$ together with shifted indices $n+i$ for $i\in I$ in the second block.
Let $\mathcal{A}[S,S]$ denote the principal submatrix obtained by restricting to rows and columns in $S$.
Then the theorem of Rajabpour--Seifi MirJafarlou--Khasseh \cite{Rajabpour2025PauliPfaffian} gives the computational-basis matrix element
\begin{align}\label{eq:kernel-pfaffian}
\bra{J}\,e^{-\frac{\beta}{2}H_{\mathcal{P}}}\,\ket{I}
&=
(-1)^{\frac{|I|(|I|+2|J|+1)}{2}}
\;\det(T_{22})^{1/2}\;
\mathrm{pf}\!\big(\mathcal{A}[S,S]\big).
\end{align}

In the $Z$-basis (occupation basis), the dephased joint distribution is
\begin{align}
\mathbb{P}(a,b)=\mathbb{P}^{(+)}(a,b)+\mathbb{P}^{(-)}(a,b),
\end{align}
where, for $|a|\equiv |b|\equiv 0\ (\mathrm{mod}\ 2)$,
\begin{align}
\mathbb{P}^{(+)}(a,b)
= \frac{\det(T_{22}^{(+)})\,\Big|\mathrm{pf}\big(\mathcal{A}_{+}[S,S]\big)\Big|^2}{Z(\beta)},
\end{align}
and, for $|a|\equiv |b|\equiv 1\ (\mathrm{mod}\ 2)$,
\begin{align}
\mathbb{P}^{(-)}(a,b)
= \frac{\det(T_{22}^{(-)})\,\Big|\mathrm{pf}\big(\mathcal{A}_{-}[S,S]\big)\Big|^2}{Z(\beta)}.
\end{align}
The marginals require partial sums over both parity sectors:
\begin{align}
\mathbb{P}(a)
&= \sum_{b}\left(\mathbb{P}^{(+)}(a,b)+\mathbb{P}^{(-)}(a,b)\right),&\mathbb{P}(b)
&= \sum_{a}\left(\mathbb{P}^{(+)}(a,b)+\mathbb{P}^{(-)}(a,b)\right).
\end{align}
The classical mutual information is then
\begin{align}\label{eq:CMI-alt}
I_{\mathrm{cl}}(A{:}B)
=
\sum_{a,b}\mathbb{P}(a,b)\,
\log\!\left(
\frac{\mathbb{P}(a,b)}{\mathbb{P}(a)\,\mathbb{P}(b)}
\right).
\end{align}

\subsection{Calculation of the partition function $Z(\b)$}

We compute the thermal partition function
\begin{align}
Z(\beta)=\mathrm{Tr}\!\left(e^{-\beta H}\right).
\end{align}
Since the Fock space decomposes into even/odd parity eigenspaces, we can split $H=H_{\mathcal{P}=+}\oplus H_{\mathcal{P}=-}$, hence
\begin{align}
Z(\beta)=\mathrm{Tr}_{\mathrm{even}} e^{-\beta H_{+}} + \mathrm{Tr}_{\mathrm{odd}} e^{-\beta H_{-}},
\end{align}
where
\begin{align}
\mathrm{Tr}_{\mathrm{even}}(\cdots)=\tfrac12\left(\mathrm{Tr}(\cdots)+\mathrm{Tr}\!\left[(-1)^{N_f}(\cdots)\right]\right),
\qquad
\mathrm{Tr}_{\mathrm{odd}}(\cdots)=\tfrac12\left(\mathrm{Tr}(\cdots)-\mathrm{Tr}\!\left[(-1)^{N_f}(\cdots)\right]\right).
\end{align}
When $\mathcal{P}=+1$, the boundary term has $-J(c_n^\dagger c_1+\mathrm{h.c.})$ giving periodic (P) boundary conditions for the fermions, and when $\mathcal{P}=-1$ we get anti-periodic (AP) boundary conditions.
Thus $H_{+}\equiv H_{P}$ and $H_{-}\equiv H_{AP}$. Generalize the boundary conditions by
\begin{align}
c_{n+1}=e^{i\theta}c_1,
\end{align}
where $\theta=0$ (P) and $\theta=\pi$ (AP).
Define the Fourier transform
\begin{align}
c_j=\frac{1}{\sqrt{n}}\sum_{k\in K_\theta} e^{ikj}\,c_k,
\end{align}
with
\begin{align}
K_\theta=\left\{\frac{2\pi m+\theta}{n}\right\}_{m=0}^{n-1}.
\end{align}
The $k$-space Hamiltonian is
\begin{align}
H_{\theta}
&=\sum_{k\in K_{\theta}}
\left(
2(h-J\cos k)\left(c^\dagger_k c_k-\tfrac12\right)
+J\sin k\,(c_{-k}c_k+c^\dagger_{k}c^\dagger_{-k})
\right).
\end{align}
For $k\neq 0,\pi$ (so that $k\not\equiv -k\ \mathrm{mod}\ 2\pi$), define the Nambu spinor
\begin{align}
\Phi_k=\begin{pmatrix} c_k\\ c_{-k}^\dagger\end{pmatrix},
\end{align}
and set $\xi_k=2(h-J\cos k)$ and $\Delta_k=2J\sin k$.
Then the combined $(k,-k)$ contribution becomes
\begin{align}\label{eq:paired}
H_{k,-k}=\Phi_k^\dagger
\begin{pmatrix}
\xi_k & \Delta_k\\
\Delta_k & -\xi_k
\end{pmatrix}\Phi_k.
\end{align}
The eigenvalues are $\pm \varepsilon(k)$ with
\begin{align}
\varepsilon(k)=\sqrt{\xi_k^2+\Delta_k^2}
=2\sqrt{(h-J\cos k)^2+(J\sin k)^2}.
\end{align}
At $k=0$ or $k=\pi$, $\Delta_k=0$ and the Hamiltonian reduces to
\begin{align}\label{eq:special}
H_k=\xi_k\left(c_k^\dagger c_k-\tfrac12\right).
\end{align}
Introducing Bogoliubov fermions $\gamma_{\pm k}$ for the blocks \cref{eq:paired} gives
\begin{align}
H_{k,-k}=\varepsilon(k)\left(\gamma^\dagger_k\gamma_k+\gamma^\dagger_{-k}\gamma_{-k}-1\right).
\end{align}
For the special modes \cref{eq:special}, take $\gamma_k=c_k$.
Altogether,
\begin{align}
H_{\theta}=\sum_{k\in K_\theta}\varepsilon(k)\left(\gamma_k^\dagger\gamma_k-\tfrac12\right).
\end{align}
Tracing over a single mode yields
\begin{align}
\mathrm{Tr}_{n=0,1}\, e^{-\beta \varepsilon (n-\tfrac12)} = 2\cosh\!\left(\frac{\beta \varepsilon}{2}\right),
\end{align}
so the full trace factorizes:
\begin{align}
\mathrm{Tr}\, e^{-\beta H_{\theta}}
= \prod_{k\in K_{\theta}} 2\cosh\!\left(\frac{\beta \varepsilon(k)}{2}\right)
\equiv C_{\theta}(\beta).
\end{align}
The parity-twisted trace inserts $(-1)^{N_f}$.
Define $N_\gamma=\sum_{k\in K_\theta}\gamma_k^\dagger\gamma_k$.
Then the physical fermion parity differs from the quasiparticle parity by the Bogoliubov-vacuum parity $P_0^\theta=\pm 1$:
\begin{align}
(-1)^{N_f}=P_0^\theta\,(-1)^{N_\gamma}.
\end{align}
For a single mode,
\begin{align}
\mathrm{Tr}_{n=0,1}\left[(-1)^n e^{-\beta \varepsilon (n-\tfrac12)}\right]
=2\sinh\!\left(\frac{\beta \varepsilon}{2}\right),
\end{align}
hence
\begin{align}
\mathrm{Tr}\left[(-1)^{N_f}e^{-\beta H_\theta}\right]
= P_0^\theta \prod_{k\in K_\theta} 2\sinh\!\left(\frac{\beta \varepsilon(k)}{2}\right)
\equiv P_0^\theta\, S_\theta(\beta).
\end{align}
Thus the full partition function is
\begin{align}\label{eq:Z_final}
Z(\beta)
&=
\frac12\Big(C_0(\beta)+\mathrm{sgn}(h-J)\,S_0(\beta)\Big)
+\frac12\Big(C_\pi(\beta)-S_\pi(\beta)\Big),
\end{align}
where
\begin{align}
C_\theta(\beta)&=\prod_{k\in K_{\theta}}2\cosh\!\left(\frac{\beta \varepsilon(k)}{2}\right),
\qquad
S_\theta(\beta)=\prod_{k\in K_{\theta}}2\sinh\!\left(\frac{\beta \varepsilon(k)}{2}\right),\\
\varepsilon(k)&=
2\sqrt{(h-J\cos k)^2+(J\sin k)^2},
\end{align}
and
\begin{align}
K_0=\left\{\frac{2\pi m}{n}\right\}_{m=0}^{n-1},
\qquad
K_\pi=\left\{\frac{2\pi(m+\tfrac12)}{n}\right\}_{m=0}^{n-1}.
\end{align}

\subsection{Final expression for $\frac12$-cut CMI}

Let $a,b\in\{0,1\}^n$ label computational basis states (bitstrings), and let
$I(a)\subset\{1,\dots,n\}$ and $J(b)\subset\{1,\dots,n\}$ be their occupied-site sets.
Define
\begin{align}
S(a,b):=J(b)\ \cup\ \big(n+I(a)\big),
\end{align}
as in \cref{eq:S-set}.
Let $\sigma(a):=(-1)^{|a|}$ be the parity of $a$ (even $\Rightarrow \sigma=+1$, odd $\Rightarrow \sigma=-1$). Define the (unnormalized) joint weight
\begin{align}
w(a,b):=
\begin{cases}
\Big|\det\big(T_{22}^{(+)}(\beta/2)\big)\Big|
\;\Big|\mathrm{pf}\!\big(\mathcal{A}_{+}^{(\beta/2)}[S(a,b),S(a,b)]\big)\Big|^2,
& |a|,|b|\ \text{even},\\[4pt]
\Big|\det\big(T_{22}^{(-)}(\beta/2)\big)\Big|
\;\Big|\mathrm{pf}\!\big(\mathcal{A}_{-}^{(\beta/2)}[S(a,b),S(a,b)]\big)\Big|^2,
& |a|,|b|\ \text{odd},\\[4pt]
0,& |a|\not\equiv |b|\ (\mathrm{mod}\ 2).
\end{cases}
\label{eq:w_joint}
\end{align}
Then the dephased joint probability is
\begin{align}
\mathbb{P}(a,b)=\frac{w(a,b)}{Z(\beta)},
\end{align}
with $Z(\beta)$ given in \cref{eq:Z_final}. The marginals can be written without an explicit sum using completeness:
\begin{align}
\mathbb{P}_A(a)
&=\sum_b \mathbb{P}(a,b)
=\frac{1}{Z(\beta)}\sum_b \big|\langle b|e^{-\frac{\beta}{2}H_{\sigma(a)}}|a\rangle\big|^2
=\frac{1}{Z(\beta)}\langle a|e^{-\beta H_{\sigma(a)}}|a\rangle,
\label{eq:marginal_A}
\\
\mathbb{P}_B(b)
&=\sum_a \mathbb{P}(a,b)
=\frac{1}{Z(\beta)}\langle b|e^{-\beta H_{\sigma(b)}}|b\rangle.
\label{eq:marginal_B}
\end{align}
Each diagonal matrix element $\langle a|e^{-\beta H_{\pm}}|a\rangle$ is given by the same Pfaffian theorem
\cref{eq:kernel-pfaffian} with $\beta/2\to \beta$ and $I=J=I(a)$. Finally, substituting $\mathbb{P}(a,b)=w(a,b)/Z(\beta)$ into \cref{eq:CMI-alt} gives
\begin{align}
I_{\mathrm{cl}}(A{:}B)
&=
\log Z(\beta)
+\sum_{a,b}\frac{w(a,b)}{Z(\beta)}
\Big(
\log w(a,b)-\log w_A(a)-\log w_B(b)
\Big),
\label{eq:Icl_final_weights}
\end{align}
where
\begin{align}
w_A(a):=\langle a|e^{-\beta H_{\sigma(a)}}|a\rangle,
\qquad
w_B(b):=\langle b|e^{-\beta H_{\sigma(b)}}|b\rangle.
\end{align}

\begin{figure}[h]
    \centering
    \includegraphics[width=0.5\linewidth]{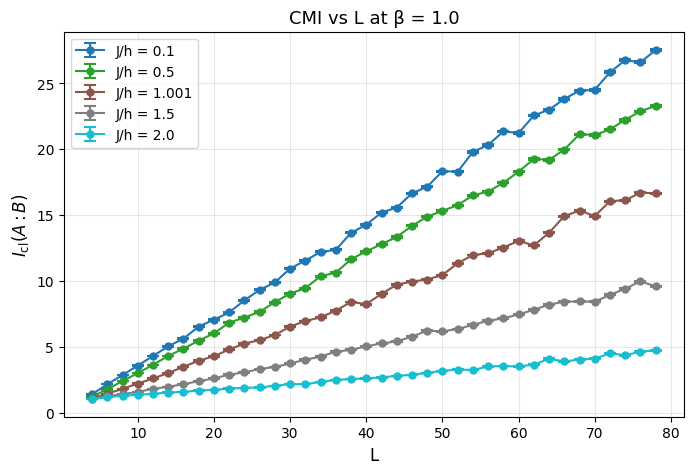}
    \label{fig:placeholder}
    \caption{Classical mutual information of the $Z$-dephased TFD state for the transverse-field Ising model at inverse temperature $\beta =1$ obtained using MCMC importance sampling. All curves exhibit volume-law scaling $I_{\mathrm{cl}}(A{:}B) \sim c(\beta,J/h) \cdot n$, for different values of the coupling ratios $J/h.$}
\end{figure}

\subsection{The limiting behavior of the $\frac12$-cut dephased mutual information}
\subsubsection{$\beta\to 0$ and small $\beta$ expansion}

We have
\begin{align}
\braket{a,b}{\mathrm{TFD}_\beta}
= \frac{1}{\sqrt{Z(\beta)}}\bra{a}e^{-\beta H/2}\ket{b}
= \frac{1}{\sqrt{Z(\beta)}}K(a,b).
\end{align}
In the limit $\beta\to 0$, the kernel reduces to the overlap
\begin{align}
K_0(a,b)=\bra{a}\mathbb{I}\ket{b}=\delta_{a,b}.
\end{align}
Furthermore, $Z(0)=\mathrm{Tr}(\mathbb{I})=2^n$. Thus
\begin{align}
\mathbb{P}_0(a,b)=\frac{\delta_{a,b}}{2^n}.
\end{align}
The marginals are uniform,
\begin{align}
\mathbb{P}_0(a)=\mathbb{P}_0(b)=\frac{1}{2^n},
\end{align}
hence (with $\log$ base $2$)
\begin{align}
I(A{:}B)=H(A)+H(B)-H(A,B)=n+n-n=n.
\end{align}
\subsubsection{Small-$\beta$ expansion}
Fix a parity sector $\sigma \in\{+,-\}$ and define
\begin{align}
    K_{\sigma}(\beta) = e^{-\frac{\beta}{2}H_{\sigma}}\,, \qquad K_{\sigma}(b,a;\beta)\;:=\;\langle b|K_\sigma(\beta)|a\rangle.
\end{align}
We can expand the matrix-exponential in powers of $\beta$ in the limit that $\beta\rightarrow0$ as follows
\begin{align}
    K_\sigma(\beta)
=
\mathbb I - \frac{\beta}{2}H_\sigma + \frac{\beta^2}{8}H_\sigma^2 +\mathcal{O}(\beta^3),
\end{align}
the corresponding matrix element for any basis states $a,b$ in the same parity sector $\sigma$ becomes
\begin{align}
K_{\sigma}(b,a;\beta)
=
\delta_{ab}
-\frac{\beta}{2}(H_\sigma)_{ba}
+\frac{\beta^2}{8}(H_\sigma^2)_{ba}
+ \mathcal{O}(\beta^3).
\label{eq:Kexp}
\end{align}
When $a \neq b,$ we have
\begin{align}
K_{\sigma}(b,a;\beta) &= -\frac{\beta}{2}(H_\sigma)_{ba} + \mathcal O(\beta^2), \nonumber \\\Rightarrow
w(a,b)&=\frac{\beta^2}{4}\,(H_\sigma)_{ba}^2 + \mathcal O(\beta^3),
\label{eq:w_offdiag}
\end{align}
and for the diagonal case $a=b,$ setting $x=-(\beta/2)(H_\sigma)_{aa}$ and $y=(\beta^2/8)(H_\sigma^2)_{aa}$,
\begin{align}
K_{\sigma}(a,a;\beta)&=1+x+y+\mathcal O(\beta^3)
\Longrightarrow
w(a,a)=1-\beta(H_\sigma)_{aa}
+\frac{\beta^2}{4}\Big((H_\sigma)_{aa}^2+(H_\sigma^2)_{aa}\Big)
+\mathcal O(\beta^3).
\label{eq:w_diag}
\end{align}
Similarly,
\begin{align}
w_A(a)=\langle a|e^{-\beta H_{\sigma(a)}}|a\rangle
=
1-\beta(H_{\sigma(a)})_{aa}
+\frac{\beta^2}{2}(H_{\sigma(a)}^2)_{aa}
+\mathcal O(\beta^3),
\label{eq:wAexp}
\end{align}
and the same formula holds for $w_B(b)$ with $a\mapsto b$. 
We can get the $\beta$ expansion for the partition function by making use of the completeness relation
\begin{align}
Z(\beta)=\sum_{a,b}w(a,b)=\mathrm{Tr}\big(e^{-\beta H}\big),
\end{align}
so the traced expansion gives
\begin{align}
Z(\beta)=2^n-\beta\,\mathrm{Tr}(H)+\frac{\beta^2}{2}\mathrm{Tr}(H^2)+\mathcal O(\beta^3).
\label{eq:Zexp}
\end{align}
Since $\mathrm{Tr}(H)=0$, the first correction is at $\mathcal O(\beta^2).$ For $a\neq b$ in the same parity sector $\sigma$,
combining \cref{eq:w_offdiag} with $Z(\beta)=2^n+\mathcal O(\beta)$ gives
\begin{align}
\mathbb P(a,b)=\frac{w(a,b)}{Z(\beta)}
=
\frac{\beta^2}{4\,2^n}\,(H_{\sigma(b)})_{ba}^2
+\mathcal O(\beta^3).
\label{eq:Poffdiag}
\end{align}
The marginal probability distribution over $A$ (and $B$) is given by
\begin{align}
    \mathbb{P}_{A}(a) = \frac{1}{2^n} \left( 1 - \beta H_{aa} +\frac{\beta^2}{2} (H^2)_{aa} -\frac{\beta^2}{2 \,2^n} \Tr(H^2) \right) + \mathcal{O}(\beta^3)\,.
\end{align}
Define $\mathbb P_{A}(a) = 2^{-n}(1+\delta_a)$ where $\delta_a$ includes the $\mathcal{O} (\beta)$ and $\mathcal{O} (\beta^2)$ terms above and notice that $\sum_a \delta_a = 0.$ Using $\log_2 \mathbb P_A(a) = -n + \frac{1}{\ln 2} \ln (1+\delta_a),$ we get
\begin{align}
    H(A) &= -\sum_a 2^{-n} (1+\delta_a) \left( -n + \frac{1}{\ln 2} \ln (1+\delta_a) \right) + \mathcal{O}(\beta^3) \nonumber \\
    &=n - \frac{1}{2^n\ln 2}\sum_a (1+\delta_a) \ln (1+\delta_a)+ \mathcal{O}(\beta^3) \nonumber \\
    &= n - \frac{1}{2^n\ln 2}\sum_a \left( \delta_a + \frac{\delta_a^2}{2} \right)+ \mathcal{O}(\beta^3) \nonumber \\
    &= n - \frac{1}{2^n\ln 2}\sum_a  \frac{\delta_a^2}{2} + \mathcal{O}(\beta^3) \,,
\end{align}
where we have used the expansion
\begin{align}
    (1+\delta_a) \ln (1+\delta_a) = \delta_a + \frac{\delta_a^2}{2} + \mathcal{O}(\delta_a^3)\,,
\end{align}
since $\delta$ is small when $\beta$ is small. Keeping only the $\mathcal{O}(\beta^2)$ term in $\delta^2_a,$ we get
\begin{align}\label{eq:HA}
    H(A) = n - \frac{\beta^2}{2 \,2^n\ln 2}\sum_a (H_{\sigma(a)})_{aa}^2   + \mathcal{O}(\beta^3)\,. 
\end{align}
Note that the off-diagonal probabilities in \cref{eq:Poffdiag} are order $\beta^2$, which is the key source of the leading correction to the entropy.
At $\beta=0$,
\begin{align}
K_\sigma(0)=\mathbb I\quad\Rightarrow\quad
\mathbb P_0(a,b)=2^{-n}\delta_{ab}.
\end{align}
Hence the marginals are uniform, $\mathbb P_{A,0}(a)=\mathbb P_{B,0}(b)=2^{-n}$, while the joint is supported only on $a=b$.
For $\beta>0$ small, \cref{eq:Poffdiag} implies that probability leaks from the diagonal to off-diagonal entries only at order $\beta^2$. Fix $b$ and define the conditional distribution $\mathbb P(\cdot|b)=\mathbb P(\cdot,b)/\mathbb P_B(b)$.
At $\beta=0$, $\mathbb P_0(a|b)=\delta_{ab}$, so $\mathbb P(\cdot|b)$ is a delta distribution.
For small $\beta$, define the leak
\begin{align}
\varepsilon_b
:=
\sum_{\substack{a\neq b\\ \sigma(a)=\sigma(b)}}\mathbb P(a|b).
\end{align}
Then $\mathbb P(b|b)=1-\varepsilon_b$ and $\varepsilon_b=\mathcal O(\beta^2)$.
If we list the nonzero values of $\mathbb P(\cdot|b)$ as a vector, it has the form
\begin{align}
p_0=1-\varepsilon_b,\qquad p_i=q_i\ (i=1,\dots,m),\qquad \sum_{i=1}^m q_i=\varepsilon_b,
\label{eq:near_det_vec2}
\end{align}
where the $q_i$'s are the off-diagonal conditional probabilities. For any such vector, the Shannon entropy (natural logs) satisfies
\begin{align}
H(p)
=-(1-\varepsilon)\ln(1-\varepsilon)-\sum_{i=1}^m q_i\ln q_i,
\end{align}
and using $-\ln(1-\varepsilon)=\varepsilon+\mathcal O(\varepsilon^2)$ gives the expansion
\begin{align}
H(p)
=
\sum_{i=1}^m q_i\big(1-\ln q_i\big)+\mathcal O(\varepsilon^2),
\label{eq:entropy_near_det2}
\end{align}
valid as $\varepsilon\to 0$. 
Using $\mathbb P_B(b)=2^{-n}+\mathcal O(\beta)$ and \cref{eq:Poffdiag},
for $a\neq b$ with $\sigma(a)=\sigma(b)$,
\begin{align}
\mathbb P(a|b)
=
\frac{\mathbb P(a,b)}{\mathbb P_B(b)}
=
\frac{\beta^2}{4}\,|(H_{\sigma(b)})_{ab}|^2+\mathcal O(\beta^3),
\label{eq:cond_off2}
\end{align}
so
\begin{align}
\varepsilon_b
=
\frac{\beta^2}{4}\sum_{\substack{a\neq b\\ \sigma(a)=\sigma(b)}}|(H_{\sigma(b)})_{ab}|^2+\mathcal O(\beta^3).
\label{eq:epsb2}
\end{align}
Applying \cref{eq:entropy_near_det2} to $\mathbb P(\cdot|b)$ and using \cref{eq:cond_off2} yields
\begin{align}
H(A|B=b)
&=
\sum_{\substack{a\neq b\\ \sigma(a)=\sigma(b)}}
\mathbb P(a|b)\Big(1-\ln \mathbb P(a|b)\Big)
+\mathcal O(\varepsilon_b^2)
\nonumber\\
&=
\frac{\beta^2}{4}
\sum_{\substack{a\neq b\\ \sigma(a)=\sigma(b)}}
|(H_{\sigma(b)})_{ab}|^2
\Bigg(
1-\ln\Big(\frac{\beta^2}{4}|(H_{\sigma(b)})_{ab}|^2\Big)
\Bigg)
+\mathcal O(\beta^3|\ln\beta|)+\mathcal O(\beta^4).
\label{eq:Hab_given_b2}
\end{align}
Averaging over $b$ gives (in base-$2$)
\begin{align}
H(A|B)
&=\sum_b \mathbb P_B(b)\,H(A|B=b)
\nonumber\\
&=
\frac{\beta^2}{4\,2^n}
\sum_b\sum_{\substack{a\neq b\\ \sigma(a)=\sigma(b)}}
|(H_{\sigma(b)})_{ab}|^2
\Bigg(
\frac{1}{\ln 2}-\log_2\Big(\frac{\beta^2}{4}(H_{\sigma(b)})_{ab}^2\Big)
\Bigg)
+\mathcal O(\beta^3|\ln\beta|)+\mathcal O(\beta^3).
\label{eq:Hab2}
\end{align}
The classical mutual information is given by
\begin{align}
I_{\rm cl}(A{:}B)=H(A)+H(B)-H(A,B)=H(A)-H(A|B).
\end{align}
Combining \cref{eq:HA} with \cref{eq:Hab2},
\begin{align}
I_{\rm cl}(A{:}B)
&=
n
-\frac{\beta^2}{2\ln 2}\frac{1}{2^n}\sum_a H_{aa}^2
-\frac{\beta^2}{4\,2^n}\sum_b\sum_{a\neq b}H_{ab}^2
\left[\frac{1}{\ln 2}-\log_2\!\left(\frac{\beta^2}{4}H_{ab}^2\right)\right]
+\mathcal O(\beta^3|\log\beta|)\,.
\end{align}
For $H=-J\sum_i X_iX_{i+1}-h\sum_i Z_i$, one has
$H_{aa}=-h\sum_i z_i(a)$ with $z_i(a)=\pm1$, hence
\begin{align}
\frac{1}{N}\sum_a H_{aa}^2 = h^2 n.
\end{align}
Also the only off-diagonal elements come from $-JX_iX_{i+1}$, giving
$H_{ab}^2=J^2$ for exactly $n$ neighbors $a\neq b$ of each $b$, hence
\begin{align}
\sum_b\sum_{a\neq b}H_{ab}^2 = 2^n n J^2,
\qquad
\log_2\!\left(\frac{\beta^2}{4}H_{ab}^2\right)=\log_2\!\left(\frac{\beta^2J^2}{4}\right).
\end{align}
Therefore,
\begin{align}
I_{\rm cl}(A{:}B)
&=
n
-\frac{\beta^2}{2\ln 2}\,h^2 n
-\frac{\beta^2}{4}\,J^2 n
\left[\frac{1}{\ln 2}-\log_2\!\left(\frac{\beta^2J^2}{4}\right)\right]
+\mathcal O(\beta^3|\log\beta|)\,.
\end{align}

\end{document}